\definecolor{ncolor4}{HTML}{C70694}
\definecolor{ncolor3}{rgb}{0.12, 0.3, 0.17}
\definecolor{ncolor2}{rgb}{0.77, 0.12, 0.23}
\definecolor{ncolor5}{rgb}{0.36, 0.22, 0.33}
\definecolor{ncolor1}{rgb}{0.64, 0.0, 0.0}
\definecolor{ncolor6}{rgb}{0.03, 0.27, 0.49}
\definecolor{myblue}{HTML}{CFCBFF}
\newcommand{\erg}[1]{\mathrm{Erg}\left(#1\right)}
\newcommand{\ergobs}[1]{\mathrm{Erg}_{\mathrm{obs}}\left(#1\right)}
\newcommand{\wt}[1]{\widetilde{#1}}
\newcommand{\overbar}[1]{\mkern 2mu\overline{\mkern-2mu#1\mkern-2mu}\mkern 2mu}
\newcommand{\mc}[1]{\mathcal{#1}}
\newcommand{\ug}{U_{\mathrm{gl}}}
\newcommand{\vg}{V_{\mathrm{gl}}}
\newcommand{\ugdag}{U_{\mathrm{gl}}^{\dag}}
\newcommand{\vgdag}{V_{\mathrm{gl}}^{\dag}}
\def\Pr{\mathrm{Pr}}  
\def\Tr{\mathrm{Tr}}
\newenvironment{protocol}[1][htb]
  {
   \begin{algorithm}[#1]%
  }{\end{algorithm}}
\DeclareOldFontCommand{\rm}{\normalfont\rmfamily}{\mathrm}
\newtheorem{theorem}{Theorem}
\newtheorem{lemma}{Lemma}
\newtheorem{remark}{Remark}
\begin{document}
\title{Sample Complexity of Black Box Work Extraction} 

\author{Shantanav Chakraborty}
\email{shchakra@iiit.ac.in}
\affiliation{Centre for Quantum Science and Technology (CQST), International Institute of Information Technology Hyderabad, Gachibowli 500032, Telangana, India}
\affiliation{Center for Security, Theory and Algorithmic Research (CSTAR), International Institute of Information Technology Hyderabad, Gachibowli 500032, Telangana, India}

\author{Siddhartha Das}
\email{das.seed@iiit.ac.in}
\affiliation{Centre for Quantum Science and Technology (CQST), International Institute of Information Technology Hyderabad, Gachibowli 500032, Telangana, India}
\affiliation{Center for Security, Theory and Algorithmic Research (CSTAR), International Institute of Information Technology Hyderabad, Gachibowli 500032, Telangana, India}

\author{Arnab Ghorui}
\email{arnab.ghorui@research.iiit.ac.in}
\affiliation{Centre for Quantum Science and Technology (CQST), International Institute of Information Technology Hyderabad, Gachibowli 500032, Telangana, India}
\affiliation{Center for Security, Theory and Algorithmic Research (CSTAR), International Institute of Information Technology Hyderabad, Gachibowli 500032, Telangana, India}

\author{Soumyabrata Hazra}
\email{soumyabrata.hazra@research.iiit.ac.in}
\affiliation{Centre for Quantum Science and Technology (CQST), International Institute of Information Technology Hyderabad, Gachibowli 500032, Telangana, India}
\affiliation{Center for Computational Natural Sciences and Bioinformatics (CCNSB), International Institute of Information Technology Hyderabad, Gachibowli 500032, Telangana, India}

\author{Uttam Singh}
\email{uttam@iiit.ac.in}
\affiliation{Centre for Quantum Science and Technology (CQST), International Institute of Information Technology Hyderabad, Gachibowli 500032, Telangana, India}
\affiliation{Center for Security, Theory and Algorithmic Research (CSTAR), International Institute of Information Technology Hyderabad, Gachibowli 500032, Telangana, India}

\begin{abstract}
Extracting work from a physical system is one of the cornerstones of quantum thermodynamics. The extractable work, as quantified by ergotropy, necessitates a complete description of the quantum system. This is significantly more challenging when the state of the underlying system is unknown, as quantum tomography is extremely inefficient. In this article, we analyze the number of samples of the unknown state required to extract work. With only a single copy of an unknown state, we prove that ergotropy approaches zero in the asymptotic limit, rendering work extraction nearly impossible. In contrast, when multiple copies are available, we quantify the sample complexity required to estimate extractable work, establishing a scaling relationship that balances the desired accuracy with success probability. Our work develops a sample-efficient protocol to assess the utility of unknown states as quantum batteries and opens avenues for estimating thermodynamic quantities using near-term quantum computers.
\end{abstract}
\date{\today}
\maketitle

\section{ Introduction}
\label{sec:intro}
Quantum thermodynamics \cite{Goold2016, Vinjanampathy2016, Thermo2018, Deffner2019, Bera2019, Potts2024} is vital for developing and optimizing a range of quantum technologies, including computation \cite{Auffeves2022, Cleri2024,  Arora2024}, heat engines \cite{Alicki1979, Scully2002, Dechant2014, Rossnage2016, Maslennikov2019, Klatzow2019}, batteries \cite{Ferraro2018, Yang2023, Campaioli2024}, and sensors \cite{Chu2022, Kominis2024}. 
A central focus of this field is the study of work extraction from quantum systems \cite{Scully2003, Michal2013, Brandao2015b, Skrzypczyk2014, Singh2019, Pinto2024}, which has significant implications for the development and advancement of quantum technologies.~A better understanding of the amount of extractable work from a given system is crucial for characterizing quantum batteries, powering quantum technologies, and, more broadly, building thermodynamically efficient and scalable quantum devices \cite{Acin2018, Buffoni2022, Myers2022, Fellous2023, Maillette2023, Moutinho2023, Arrachea2023, Jaschke2023, Van2024, Majidi2024, Patryk2024}.

For an $N$-dimensional quantum system in a state $\rho$ with Hamiltonian $H$, the extractable work under unitary transformations is quantified by the ergotropy of the state $\rho$, given by \cite{Lenard78, Pusz1978, Allahverdyan2004}
\begin{align}
\label{eq:erg-def-first}
    \erg{\rho}&=\max_{U\in \mathsf{U}(N)}\Tr\left[H\left(\rho-U\rho U^{\dag}\right)\right],
\end{align}
where $\mathsf{U}(N)$ is the group of $N\times N$ unitary matrices. In recent years, ergotropy has been extensively explored across various information processing and computing tasks \cite{Farina2019, Singh2021, Sen2021, swati2023, simon2024, Castellano2024, Bhattacharyya2024, joshi2024, xuereb2024resources, Castellano2024b}.  

To estimate the maximal amount of extractable work by measuring ergotropy, it is necessary to optimize over unitaries $U$, as shown in Eq.~\eqref{eq:erg-def-first}. The resulting optimal unitary depends on the spectrum of $\rho$, which is revealed only when a complete description of the state of a quantum system is available. However, in realistic settings, due to technological limitations, the presence of uncontrollable environmental noise, and/or untrusted sources, the state in which the system is prepared is often unknown or, at best, only partially known \cite{Caves2002, Alvarez2015, Szangolies2015, Oszmaniec2016, Li2019, Lebedev2020, cheng2024, pensia2024, zhu2024, watanabe2024, Zambrano2024}. This makes it impossible to estimate ergotropy (and hence, the maximal possible extractable work) without performing a full tomography on $\rho$, a highly inefficient procedure \cite{Cramer2010, Odonnell2015}. The best we can hope for is to devise a strategy in such a black-box setting that still provides some information about the amount of extractable work while requiring only a few samples of $\rho$. This naturally leads us to the following question: In such a black box setting, where the underlying quantum state is unknown, how many samples are needed to obtain an estimate of the amount of extractable work from the system? Despite the importance of a sample-efficient scheme for work extraction, this question has remained largely unexplored. In this paper, we address this gap by examining how the number of copies of $\rho$ affects work extraction.
\begin{figure}[ht!]
\centering
\includegraphics[width=0.45\textwidth]{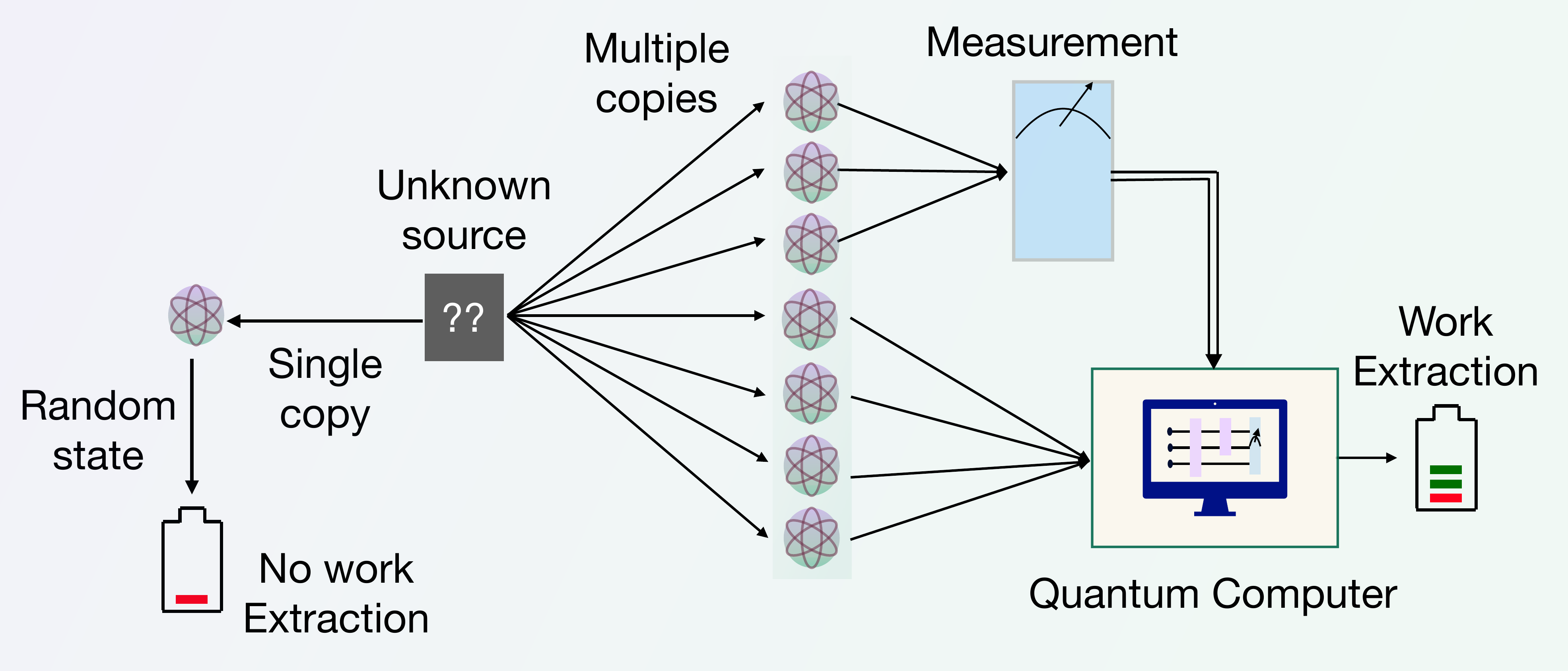}
\caption{Schematic of the possibilities for extracting work when a quantum system is prepared in some unknown state $\rho$. We show that it is nearly impossible to extract any work with only a single copy of $\rho$. On the other hand, when multiple copies are available, we provide a sample and qubit-efficient randomized quantum algorithm for work extraction that requires only a few samples of $\rho$.}
\label{fg.schem}
\end{figure}

One possibility is to proceed by assuming that the $N$-dimensional unknown state $\rho$ is random. That is, the purification of $\rho$ leads to a Haar random state of dimension $NN'$ (for large environment dimension $N'$). Our first contribution is to rigorously establish that it is nearly impossible to extract any (non-zero) work from a single copy of the unknown random quantum state in the asymptotic limit. This is obtained by proving that ergotropy is Lipschitz continuous with the Lipschitz constant scaling as the spectral norm of the Hamiltonian of the system, allowing us to use the concentration of measure phenomenon (such as Levy's lemma \cite{Ledoux2005}) to show that for any random $\rho$, $\mathrm{Erg}(\rho)$ is close to its expectation value, which approaches zero as $N'\rightarrow\infty$. Owing to its connections with quantifying work, there is widespread interest in estimating ergotropy and understanding its properties. Consequently, our result is of independent interest. On the other hand, as mentioned before, tomography would require exponentially many samples and measurements to estimate ergotropy. This points to a trade-off between the number of available copies of the unknown state and extractable work. Interestingly, in Ref.~\cite{vsafranek2023work}, the authors propose a scheme using partial information from quantum measurements that quantifies the extractable work from unknown quantum states by associating it with a quantity called observational ergotropy. However, it is unclear if this quantity can be estimated faithfully in a sample-efficient manner.

Our second contribution equips this procedure with rigorous performance guarantees.~More precisely, we develop a sample-efficient randomized protocol to estimate observational ergotropy, facilitating work extraction. Our method computes this quantity to $\varepsilon$-additive accuracy, with high probability, by making very few measurements on $\rho$ -- this requires $O(1/\varepsilon^2)$ copies of $\rho$ and, remarkably, has only a logarithmic dependence on the dimension of the system. The protocol is a randomized quantum algorithm that can be efficiently run on a quantum computer using as many samples of $\rho$. Finally, we establish rigorous robustness bounds, showing that observational ergotropy can still be reliably estimated under realistic errors. Our result thus broadens the applicability of quantum computers to the efficient estimation of extractable work and the characterization of quantum batteries. 

The article is organized as follows. In Sec.~\ref{sec:main-results}, we give an overview of our main findings, namely the two distinct approaches to work extraction. Sec.~\ref{subsec:no-work-single-copy} shows that the ergotropy of a single copy of a random state $\rho$ approaches zero in the asymptotic limit. In contrast, when multiple copies of $\rho$ are available, Sec.~\ref{subsec:work-multiple-copy} presents a sample-efficient randomized quantum algorithm for estimating observational ergotropy. The robustness of this estimation is analyzed in Sec.~\ref{sec:robustness-ergobs}. Preliminaries are introduced in Sec.~\ref{prelim}, which lay the foundation for the technical proofs in Sec.~\ref{sec:methods}. Detailed derivations of the results from Secs.~\ref{subsec:no-work-single-copy}, \ref{subsec:work-multiple-copy}, and \ref{sec:robustness-ergobs} are provided in Secs.~\ref{append-average-ergotropy}, \ref{subsec:proof-multiple-copies}, and \ref{sub:proofs-lems}, respectively. In Sec.~\ref{subsec:pauli-measurement}, we also outline an alternative quantum algorithm that efficiently estimates extractable work when the system Hamiltonian is local. Sec.~\ref{sec:discussion} concludes the paper. Finally, the Appendix presents several complementary results and illustrative examples that further reinforce the theoretical findings.

\section{Main Results}
\label{sec:main-results}
In this section, we outline the main findings of our work, which are summarized in Fig.~\ref{fg.schem}. Throughout this work, we consider $n$-qubit quantum systems associated with a Hilbert space $\mathcal{H}_N$ of dimension $N=2^n$, and the underlying system is described by a Hamiltonian $H$, which we have full knowledge of. Suppose $\mathcal{D}_N$ denotes the set of density operators on $\mathcal{H}_N$, then, in our setting, the $n$-qubit system of Hamiltonian $H$ is prepared in some unknown quantum state $\rho\in\mathcal{D}_N$.

\subsection{(No) work extraction from a single copy of $\rho$} 
\label{subsec:no-work-single-copy}
When only a single copy of the unknown state is available, it is, in principle, possible to consider $\rho$ to be a random state drawn from some statistical distribution over all possible density matrices. In this picture, we assume that a single query to a black box produces a state sampled from a known distribution, motivated by the fact that with only one copy of $\rho$ there is little else that the experimentalist can exploit. Thus, in this black box setting, although the underlying distribution is known, the specific instance of the state, produced by the black box, remains unknown. The idea then is to use information about this distribution to extract some work. To this end, we assume that $\rho \in \mc{D}_N$ is a random full rank density matrix whose purification leads to a Haar random state of dimension $NN'$. We prove that it is almost always impossible to extract any nontrivial amount of work (ergotropy) with high probability over possible density matrices from this distribution. This result is obtained via a sequence of non-trivial steps, which we outline here, while detailed derivations can be found in Sec.~\ref{append-average-ergotropy}.

Since $\rho$ is a random quantum state, its ergotropy, $\mathrm{Erg}(\rho)$, is a random variable. For any Hamiltonian $H$ of bounded spectral norm $\|H\|$, we first prove that this expectation value is upper bounded by a quantity which approaches zero as $N'\rightarrow\infty$. More precisely, (See  Lemma \ref{lem:avg-ergotropy} of Sec.~\ref{append-average-ergotropy}), we prove that
$$
\mathbb{E}\left[\mathrm{Erg}(\rho)\right]\leq \|H\|\sqrt{\dfrac{N}{N'}}.
$$
We then estimate for a random instance of $\rho$ how close $\mathrm{Erg}(\rho)$ is to its expected value. Typically, this is proved using measure concentration bounds, which require Lipschitz continuity. However, whether the latter property holds for the ergotropy function has been unknown until recently \cite{hovhannisyan2024concentration}. We develop a slightly different approach to prove that ergotropy is a Lipschitz continuous function with a Lipschitz constant of $4\norm{H}$. This allows us to apply the concentration of measure bound to derive that with a high probability $\mathrm{Erg}(\rho)\leq \gamma$, where $\gamma$ is a small number approaching zero. Our results are formally stated via the following theorem.
\begin{figure}[ht!]
\centering
\subfloat[N=2]{
{\includegraphics[width=0.4\textwidth]{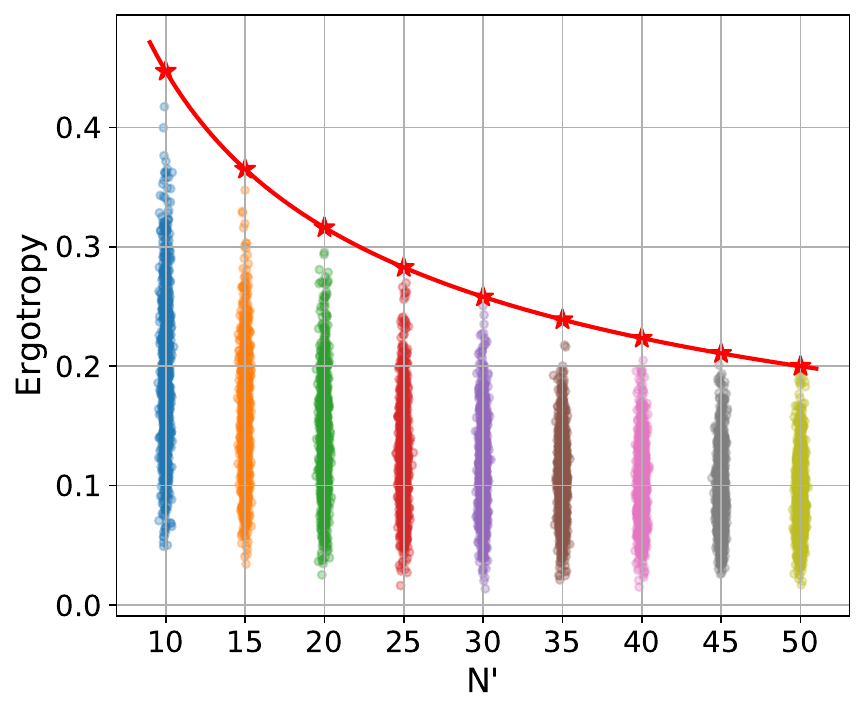}\label{fg.N2}}
}
\subfloat[N=3]{
{\includegraphics[width=0.4\textwidth]{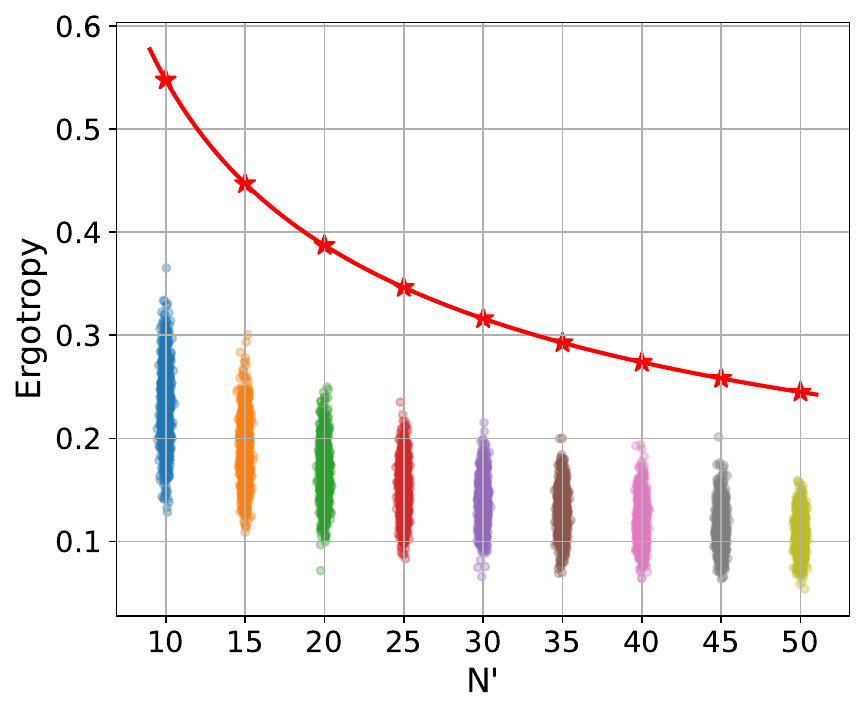}\label{fg.N3}}
}\\
\subfloat[N=4]{
{\includegraphics[width=0.4\textwidth]{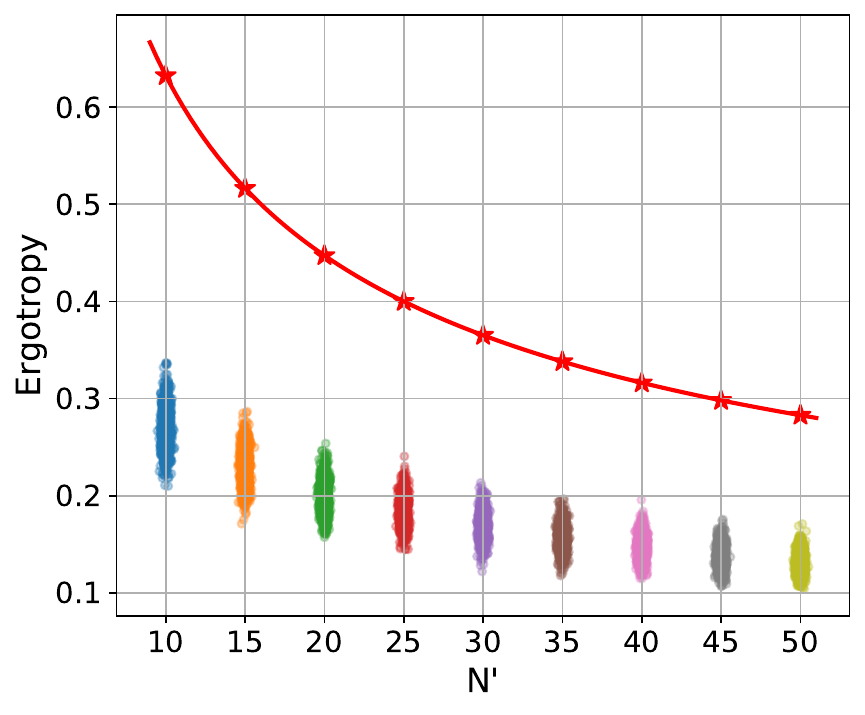}\label{fg.N4}}
}
\subfloat[N=5]{
{\includegraphics[width=0.4\textwidth]{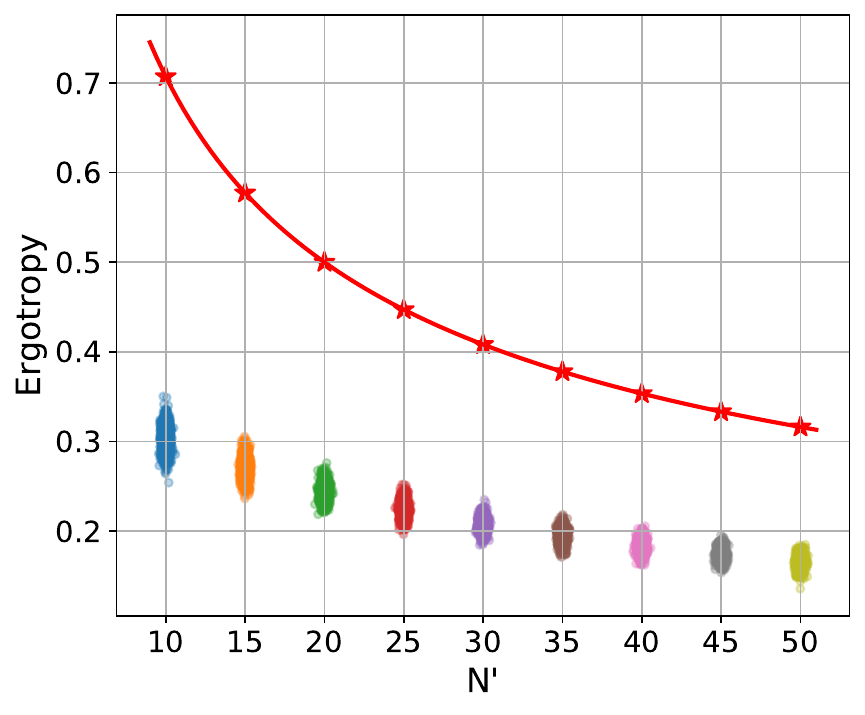}\label{fg.N5}}
}
\caption{Ergotropy of random states. We plot the ergotropy of random quantum states (generated by partial tracing $NN'$-dimensional Haar random pure states) for $N=2$ (Fig.~\ref{fg.N2}), $N=3$ (Fig.~\ref{fg.N3}), $N=4$ (Fig.~\ref{fg.N4}), and $N=5$~(Fig.~\ref{fg.N5}), as a function of $N'$. The Hamiltonian is considered to be $H=\sum_{i=1}^N h_i/N$, where $h_i=\ket{1}\bra{1}$, and $\norm{H}=1$. For each $N$ and $N'$, we sample several random density matrices and compute the ergotropy for each of them (indicated by the cluster of points for each $N, N'$). The red curve in each figure is a plot of the function $\sqrt{N/N'}$, which is an upper bound on ergotropy of random states with high probability as established in Theorem \ref{thm:impossibility-work}. We find that our upper bound becomes increasingly tighter as the ratio $N/N'$ decreases.}
\label{fg.erg-conc}
\end{figure}
\begin{theorem}[Impossibility of work extraction from a random state]
\label{thm:impossibility-work}
    Let $\rho(U) \in \mc{D}_N$ be a full rank random density matrix obtained by partial tracing subsystem $B$ of a randomly selected pure state $U\ket{\psi_0}_{AB}$ in a composite Hilbert space of dimension of $NN'$, where $\ket{\psi_0}_{AB}$ is some fixed state and $N'>N$. Then for $\gamma >2\norm{H}\,\sqrt{N/N'}$, the ergotropy of $\rho(U)$ satisfies
\begin{align}
\label{eq:main-th-main-text}
    \Pr\left[ \mathrm{Erg}(\rho(U))  > \gamma \right] \leq \exp\left[-\frac{NN'\gamma^2}{768 \,\norm{H}^2}\right].
\end{align}
\end{theorem}
We prove this theorem in detail in Sec.~\ref{subsec:no-work-single-copy}. Note that the tolerance $\gamma$ of ergotropy goes to zero in the limit $N'\rightarrow\infty$. Also, the right-hand side of Eq.~\eqref{eq:main-th-main-text} vanishes in the limit $N'\rightarrow\infty$ for any $\gamma$, however small. Thus Theorem \ref{thm:impossibility-work} can be interpreted as follows: it is impossible (assuming that the Hamiltonian is bounded) to extract any ergotropy from a full rank random quantum state in $\mc{D}_N$ obtained by partial tracing randomly selected pure state in a composite Hilbert space of dimension of $NN'$, in the limit $N'\rightarrow \infty$. We numerically calculate the ergotropy of random quantum states (obtained as mentioned previously) with increasing $N'$, for a fixed Hamiltonian $H$. The results, shown in Fig.\ref{fg.erg-conc}, indicate that the upper bound on $\mathrm{Erg}(\rho)$ from Theorem \ref{thm:impossibility-work}, becomes increasingly tight as the ratio $N/N'$ decreases.

One expects that this limitation can be overcome given access to multiple copies of the unknown state $\rho$. In the next section, we consider a different black box setting and rigorously quantify the number of samples of $\rho$ required to quantify the amount of work that can be extracted from it. 

\subsection{Extracting work from multiple copies of $\rho$}
\label{subsec:work-multiple-copy}
In this section, we consider a different setting: each query to a black box returns the same fixed but unknown state $\rho$. By allowing multiple queries, one can obtain multiple copies of this state. Crucially, unlike the single-copy scenario discussed earlier, this multi-copy black-box model does not involve sampling from a random ensemble: the state $\rho$ is fixed (though unknown) throughout. We show that, in this setting, the observational ergotropy, which quantifies the extractable work, can be estimated in a sample-efficient manner using a quantum computer. 

As outlined in Ref.~\cite{vsafranek2023work}, this quantity can be obtained via the following two-stage process:~(i) Firstly, perform projective measurement with mutually orthogonal operators $\{ {P}_i\}_{i=1}^M$ such that $\sum_{i}P_i=\mathbb{I}_N$, and compute $p_i=\Tr[P_i \rho]$, for each $i$. This step essentially decomposes the full space into $M$ blocks of orthogonal support, i.e.\ $\oplus_{i=1}^{M}\mathcal{H}_i$, where $\mathcal{H}_i=P_i \mathcal{H}_N P_i
$. We assume that the dimension of each block is $2^r$, where $r= \log_2(N/M)$. (ii) Secondly, randomize each block completely by applying a direct sum of Haar random unitaries to convert the still unknown state from the previous step into a known state $\widetilde{\rho}$. That is, \begin{align}
\label{eq:randomized-state}
\wt{\rho}&=\int_{\wt{V}_i\sim U(2^r)} d\mu_{\mathrm{Haar}}\left(\oplus_{i=1}^{M} \wt{V}_i\right)\times \left[\left(\oplus_{i=1}^{M} \wt{V}_i\right) \rho \left(\oplus_{i=1}^{M} \wt{V}^{\dag
}_i\right) \right],
\end{align}
where $d\mu_{\mathrm{Haar}}$ is the normalized Haar measure, which factorizes nicely as $d\mu_{\mathrm{Haar}}\left(\oplus_{i=1}^{M} \wt{V}_i\right)=\prod_{i} d\mu_{\mathrm{Haar}}\left(\wt{V_i}\right)$. These steps allow us to compute the observational ergotropy of the original unknown state $\rho$, which is defined as (see Sec.~\ref{prelim} and Refs.~\cite{Safranek2019, Safranek2021, vsafranek2023work})
\begin{align}
\label{eq:reg-obs-first}
\ergobs{\rho}=\Tr\left[H~\left(\rho-U_{\mathrm{gl}}~\wt{\rho}~ U^{\dag}_{\mathrm{gl}}\right)\right],
\end{align}
where $U_{\mathrm{gl}}$ is a global unitary implementing a basis change from the basis of space $\oplus_{i=1}^{M}\mathcal{H}_i$ to the eigenbasis of $H$ and a permutation that orders the $p_i$'s in non-increasing order. This has the effect of constructing the passive state with respect to $\wt{\rho}$, analogous to the role of the optimal unitary in the expression for ergotropy. As there is complete knowledge of $H$, all that is required to construct $U_{\mathrm{gl}}$ are the $p_i$'s obtained from the first stage. Since $\widetilde{\rho}$ is obtained by extracting partial information from the original state, the quantity $\mathrm{Erg}_{\mathrm{obs}}(\rho)$ provides a lower bound on $\mathrm{Erg}(\rho)$, with the two quantities matching when $\rho$ is completely known. In fact, in Appendix \ref{append-deviation-erg-ergobs}, we rigorously derive a non-trivial lower bound on the deviation between these two quantities.

We first quantify the number of samples of $\rho$ required to obtain an estimate $\wt{p_i}$ of~$p_i=\Tr[P_i\rho]$, for all $i\in [1,M]$. Performing a projective measurement results in outcome $i$ with probability $p_i$. The strategy is to make $\wt{T}$ repeated projective measurements, each time with a fresh copy of $\rho$: for the $j^{\mathrm{th}}$ iteration, the measurement outcome $i$ corresponds to an i.i.d. random variable $Y^{(j)}_{i}$ whose expectation value is $p_i$. So, for $i\in [1, M]$, we take the sample mean of each of the $\wt{T}$ runs, i.e. $\wt{p_i}=\sum_{j=1}^{\wt{T}} Y^{(j)}_{i}/\wt{T}$. Then, in Sec.~\ref{subsec:work-multiple-copy} we prove that Hoeffding's inequality \cite{Hoeffding1963} ensures that $|\wt{p_i}-p_i|\leq \varepsilon$, with a failure probability of at most $\delta/M$, as long as $\wt{T}$ is large enough. That is, if $G_i$ is the event that $|\wt{p_i}-p_i|>\varepsilon$, then $\Pr[G_i]\leq \delta/M$. Finally, using the union-bound, we derive that only $\wt{T}$ samples are sufficient to ensure that for every $i\in [1, M]$, $|\wt{p_i}-p_i|\leq \varepsilon$, with a probability of at least $1-\delta$. 

Observe that the construction of the global unitary $U_{\mathrm{gl}}$ necessitates arranging the set of estimates $\{\wt{p_i}\}_{i=1}^{M}$ in non-increasing order. That is, the accuracy with which these estimates approximate each $p_i$ should be such that they leave this ordering unaltered, i.e., \ $p_i-p_j\geq 0$ should also imply $\wt{p_i}-\wt{p_j}\geq 0$, for all pairs of $ i, j\in [1, M]$. Thus, if 
$$
\Delta := \min_{i,j}\{|p_i-p_j|: p_i\neq p_j\},
$$ 
this is ensured by choosing any $\varepsilon < \Delta/2$. Despite requiring only a few measurements, this estimation scheme provides sufficient information regarding $\rho$ to enable the construction of $U_{\mathrm{gl}}$. Formally, we prove the following lemma in Sec.~\ref{subsec:proof-multiple-copies}:

\begin{lemma}
\label{lem:sample-prob-with-error}
 The number $\wt{T}$ of samples of $\rho$ required to estimate $M$ probabilities upto an error $\Delta/3$, i.e., $\left| \wt{p}_i-p_i\right| \leq \Delta/3$ for all $i\in [M]$ with failure probability at most $\delta$, is
\begin{align}
\label{eq:samp-prob-prot}
    \wt{T} =O\left(\dfrac{\log (M/\delta)}{\Delta^2}\right).
\end{align}
\end{lemma}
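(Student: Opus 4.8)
The plan is to assemble the lemma from the two ingredients developed immediately above: a simultaneous (all-$i$) concentration bound for the empirical frequencies $\wt{p}_i$, together with the observation that a sufficiently fine additive accuracy preserves the ordering of the true probabilities $p_i$. Since everything needed has already been set up, the proof amounts to a careful packaging of these pieces and a choice of $\varepsilon$ tied to $\Delta$.

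First I would fix the target accuracy at the level dictated by the ordering requirement, because the construction of $\ug$ only needs the estimates $\{\wt{p}_i\}$ to reproduce the relative ordering of the $\{p_i\}$. It therefore suffices that $p_i-p_j\geq 0$ implies $\wt{p}_i-\wt{p}_j\geq 0$ for every pair. Writing $\wt{p}_i-\wt{p}_j=(\wt{p}_i-p_i)-(\wt{p}_j-p_j)+(p_i-p_j)$ and using $p_i-p_j\geq\Delta$ whenever $p_i\neq p_j$, a uniform additive error $|\wt{p}_i-p_i|\leq\varepsilon$ yields $\wt{p}_i-\wt{p}_j\geq\Delta-2\varepsilon$. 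Choosing $\varepsilon=\Delta/3<\Delta/2$ thus guarantees $\wt{p}_i-\wt{p}_j>0$, so the ordering is preserved; the equality case $p_i=p_j$ is irrelevant since it does not constrain $\ug$.

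Next I would invoke the Hoeffding bound already established for the i.i.d.\ indicator variables $Y^{(i)}_j\in[0,1]$ with mean $p_i$, giving $\Pr[|\wt{p}_i-p_i|\geq\varepsilon]\leq 2\exp[-2\wt{T}\varepsilon^2]$ for each $i$. A union bound over the $M$ outcomes then controls the probability that \emph{any} estimate deviates by more than $\varepsilon$ by $2M\exp[-2\wt{T}\varepsilon^2]$. Requiring this to be at most $\delta$ forces $\wt{T}\geq\frac{1}{2\varepsilon^2}\ln(2M/\delta)$, and substituting $\varepsilon=\Delta/3$ gives $\wt{T}=O(\log(M/\delta)/\Delta^2)$, which is the claimed bound.

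There is no genuinely hard step here; the statement is a consequence of standard tail bounds combined over the $M$ indices. The only point requiring care is the coupling between the chosen accuracy and the spectral gap $\Delta$: the accuracy $\varepsilon$ must be tied to the smallest nonzero gap among the $p_i$ rather than to any absolute scale, which is precisely why $\Delta$ (and not $\varepsilon$) appears in the final sample complexity. Verifying that $\varepsilon=\Delta/3$ simultaneously satisfies the ordering constraint and the failure-probability budget of $\delta$ completes the argument.
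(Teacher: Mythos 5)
Your proposal is correct and follows the same route as the paper: Hoeffding's inequality for the indicator variables, a union bound over the $M$ outcomes, the ordering argument giving $\wt{p}_i-\wt{p}_j\geq\Delta-2\varepsilon$, and the substitution $\varepsilon=\Delta/3$ into $\wt{T}\geq\frac{1}{2\varepsilon^2}\ln(2M/\delta)$. No gaps.
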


Having estimated every $\wt{p}_i$ with sufficient accuracy in a sample-efficient manner, we are now in a position to estimate observational ergotropy for unknown state $\rho$. The central idea is to develop an easy-to-implement, efficient, randomized scheme that iteratively estimates the desired quantity. This is given by Protocol \ref{prot1}. Indeed, the outcome of this randomized protocol is a number that approximates the observational ergotropy to $\varepsilon$-additive accuracy, with a success probability of at least $1-\delta$. We outline the details of this scheme.
\RestyleAlgo{boxruled}
\begin{protocol}
\caption{$\texttt{Work Extraction from unknown~}\rho$}
  \label{prot1}
\begin{enumerate}
    \item With probability $1/2$, measure $H$ on the state $\rho$.
    \item With probability $1/2$, execute steps $(a)-(c)$.
    \begin{enumerate}
        \item Measure $\rho$ in the basis spanned by the projectors $\{P_i\}_{i=1}^{M}$ and ignore the outcomes to obtain the post-measurement state $\bar{\rho}$.
        \item Sample a unitary $U_r$ uniformly at random from the $r$-qubit Pauli group, where $r=\log_2(N/M)$, and apply the unitary $\ug(\mathbb{I}_M\otimes U_r)$ to $\bar{\rho}$ to obtain
        \begin{align*}
        \rho'=\ug(\mathbb{I}_M\otimes U_r) \bar{\rho} (\mathbb{I}_M\otimes U_r)^{\dag}\ugdag
        \end{align*}
        \item Measure $-H$ in the state obtained after Step 2(b).
    \end{enumerate}   
  \item Store the outcome of the measurement of the $j^{\mathrm{th}}$ iteration into $\mu_j$.
  \item Repeat Steps 1 and 2, a total of $T$ times. 
  \item Output $\mu=\dfrac{2}{T}\sum_{j=1}^{T} \mu_j$.
\end{enumerate}
\end{protocol}

As the outcome of the measurement on $\rho$ is ignored, Step~2(a) naturally coarse-grains the state, on average. Note that each $P_i$ acts on the whole Hilbert space $\mathcal{H}_N$, but has support only on $\mathcal{H}_i$. If $\rho_{ii}$ denotes the projection of $\rho$ on the support of $P_i$, the post-measurement state $\bar{\rho}$, averaged over all outcomes, is
$$
\mathbb{E}[\bar{\rho}]=\sum_{i=1}^{M}  P_i\rho P_i=\bigoplus_{i=1}^{M} \rho_{ii}.
$$ 
This is a block-diagonal state, where each block corresponds to the restriction of $\rho$ to $\mathcal{H}_i$. To obtain the desired coarse-grained state $\wt{\rho}$ (as defined in Eq.~\eqref{eq:randomized-state}), it suffices to apply a direct sum of Haar-random unitaries to $\mathbb{E}[\bar{\rho}]$. However, doing so efficiently is essential for preserving the overall sample complexity of the protocol. 

To achieve this, Step~2(b) samples a single unitary $U_r$ uniformly at random from the $r$-qubit Pauli group $\mc{P}(r)$, which forms an exact unitary 1-design consisting of $L = 4^{r+1}$ elements, denoted as ${\wt{U}_1, \dots, \wt{U}_L}$ \footnote{One could have also sampled from the $r$-qubit Clifford group which is an exact $3$-design \cite{webb2016cliffordgroupformsunitary, Mele2024introductiontohaar}.}. The operation $\mathbb{I}_M \otimes U_r = \oplus_{i=1}^{M} U_r$ then applies the same unitary $U_r$ independently to each block of the post-measurement state $\bar{\rho}$.  

Let $\rho''$ denote the resulting state after this application. Then, averaging over the randomness in $U_r$, we find
\begin{align}
\mathbb{E}_{U_r\sim\mc{P}(r)}[\rho'']&=\bigoplus_{j=1}^{M} \left[\dfrac{1}{L}\sum_{\ell=1}^{L} \tilde{U}_{\ell}\rho_{jj} \tilde{U}^{\dag}_{\ell}\right]\nonumber\\
&=\bigoplus_{j=1}^{M} \int_{U\sim\mathsf{U}(2^r)} U \rho_{jj} U^{\dag}~\mathrm{d}\mu(U)
=\wt{\rho},
\label{eq:expected-state-algorithm-1}
\end{align}
where the second equality follows because the Pauli group is a unitary 1-design.

As discussed earlier, applying the global unitary $U_{\mathrm{gl}}$ to the coarse-grained state $\wt{\rho}$ produces the passive state associated with $\wt{\rho}$. Therefore, by linearity of expectation, the average output of Step 2(b) of Protocol~\ref{prot1} is
$$
\mathbb{E}[\rho']=\ug~\mathbb{E}_{U_r\sim\mc{P}(r)}[\rho'']~\ugdag=U_{\mathrm{gl}}~\wt{\rho}~U_{\mathrm{gl}}^{\dag}.
$$
In Protocol~\ref{prot1}, Step 1 involves performing a POVM measurement of the Hamiltonian $H$ on the state $\rho$. The measurement outcome is a random variable taking values in the interval $[-\|H\|, +\|H\|]$, with expectation value given by $\Tr[H \rho]$. Similarly, the outcome of Step 2(c), which measures $-H$ on the output state $\rho'$, is also a random variable, with expected value
$$
\Tr[-H~\mathbb{E}[\rho']]=\Tr\left[-H~\ug \mathbb{E}_{U\sim \mc{P}(r)} [\rho'' ]\ugdag \right]=-\Tr[H~\ug\wt{\rho}\ug^{\dag}].
$$ 
As both these measurements are performed independently with probability $1/2$, the outcome of the $j^{\mathrm{th}}$ run of Steps 1 and 2 of Protocol \ref{prot1} is a random variable $\mu_j$ such that 
\begin{equation}
\label{eq:ergobs-on-avg}
   2\mathbb{E}[\mu_j]=\Tr[ H(\rho-\ug\wt{\rho}\ugdag)]=\ergobs{\rho}. 
\end{equation}
Since these steps in Protocol \ref{prot1} are executed a total of $T$ times, the outcome is a sequence of~i.i.d.~random variables $\{\mu_j\}_{j=1}^{T}$. Finally, in Step 5, the overall output of our method is twice the mean $\mu$ of these random variables. We find $T$ for which the sample mean $\mu$ is an $\varepsilon$-accurate estimate of the actual expectation value ($\ergobs{\rho}$) via Hoeffding's inequality \cite{Hoeffding1963}, which implies for any $\varepsilon>0$, 
$$\mathrm{Pr}\left[\left|\mu -\ergobs{\rho} \right| \geq \varepsilon \right]\leq 2\exp\left[-T \varepsilon^{2}\norm{H}^{-2}/8 \right].
$$ 
Thus, $\mu$ approximates observational ergotropy to $\varepsilon$-additive accuracy, i.e., \ $|\mu-\ergobs{\rho}|\leq \varepsilon$, with probability at least $1-\delta$, as long as
\begin{align}
\label{eq:samp-work-prot}
T=O\left(\dfrac{\norm{H}^2\log\left(1/\delta\right)}{\varepsilon^2}\right).
\end{align}
Overall, the probability estimation stage (See Lemma \ref{lem:sample-prob-with-error}) and the work estimation stage combined require 
\begin{align}
    \wt{T}+T=O\left(\dfrac{\|H\|^2\log(M/\delta)}{\varepsilon^2}\right)
\end{align} samples (equivalently, as many measurements) of $\rho$ and succeeds with probability $(1-\delta)^2$. Note that the overall sample complexity depends only logarithmically on the dimension of the system.

\begin{figure}[ht!]
\centering
\includegraphics[width=0.45\textwidth]{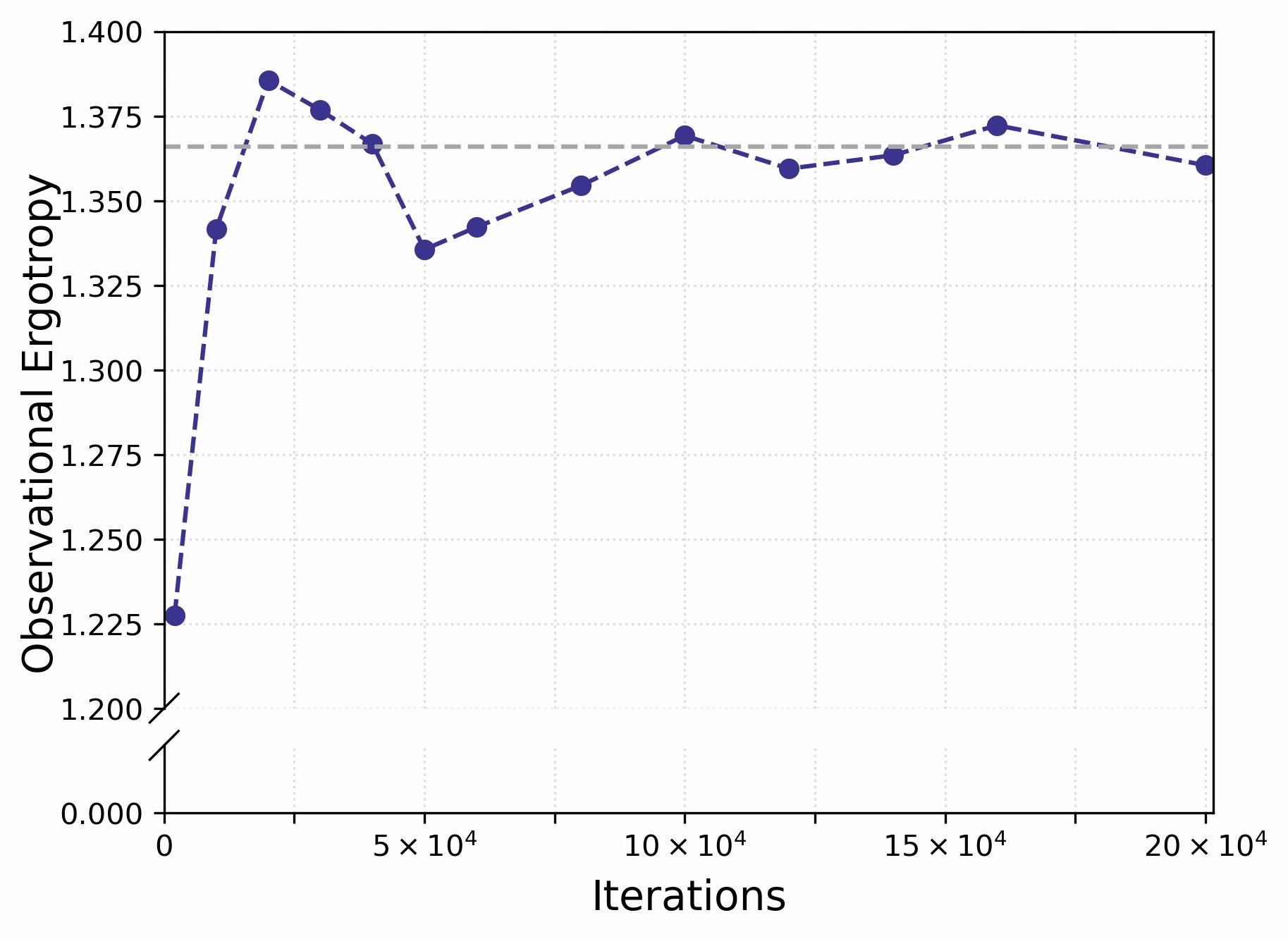}
\caption{Work extraction from translationally invariant Heisenberg XXX model. The Hamiltonian for $n$ spins is given by $ H_{n} = - \sum_{j=1}^n\sigma_z^{(j)} \otimes \sigma_z^{(j+1)} -  \sum_{j=1}^n\sigma_x^{(j)}$ (with the periodic boundary condition), where  $\sigma_z^{(j)}$ and $\sigma_x^{(j)}$ for Pauli-$Z$ and Pauli-$X$ matrices for the $j$th spin, respectively. For $n=3$, let the spins be in the state $\rho = \ket{11}\bra{11}\otimes \mathbb{I}_2/4+ \frac{1}{2}\ket{\psi}\bra{\psi}$, where $\ket{\psi} = \frac{1}{\sqrt{3}}\left(-\ket{011} + \ket{100} + \ket{101}\right)$. We have $\erg{\rho}\approx 2.53$ while $\ergobs{\rho}\approx 1.37$. To estimate the latter quantity, we perform projective measurement $\{L_i\}_{i=1}^4$ with $L_1=\ket{00}\bra{00}\otimes \mathbb{I}_4$, $L_2=\ket{01}\bra{01}\otimes \mathbb{I}_4$, $L_3=\ket{10}\bra{10}\otimes \mathbb{I}_4$, and $L_4=\ket{11}\bra{11}\otimes \mathbb{I}_4$. The figure shows how Protocol \ref{prot1} approximates $\ergobs{\rho}$ as a function of the number of samples $T$. For $T$ as in Eq.~\eqref{eq:samp-work-prot}, we obtain an $\varepsilon$-accurate estimate of $\ergobs{\rho}$ for  $\varepsilon=10^{-2}$, and failure probability $\delta=10^{-4}$.}
\label{fg.samp-spin-main-text}
\end{figure}

In Fig.~\ref{fg.samp-spin-main-text}, we numerically demonstrate how Protocol \ref{prot1} estimates the observational ergotropy. We plot this quantity as a function of samples of the unknown state. We consider a 3-qubit system described by the Heisenberg XXX model Hamiltonian \cite{sachdev2011quantum}. For $\varepsilon=10^{-2}$, $\delta=10^{-4}$, observe that Protocol \ref{prot1} requires roughly $1.5/\varepsilon^2$ samples to estimate the desired quantity. To help illustrate our findings, we provide other examples in Sec.~\ref{append-examples1} and Sec.~\ref{append-examples2} of the Appendix.

Our method can also be seen as a randomized quantum algorithm to estimate the amount of extractable work from an unknown state. Since the projectors $P_i$ are easily implementable, both stages can be implemented on a quantum computer requiring $\wt{T}+T$ independent iterations. Interestingly, the quantum algorithm is qubit efficient, i.e.,\ it requires no ancilla qubits and can potentially be implemented in the intermediate term. If $\tau_{g}$ is the circuit depth of the global unitary $\ug$, then the circuit depth of each run is simply $O(\tau_g)$, as it requires constant circuit depth to implement a randomly sampled unitary from the $r$-qubit Pauli group (in Step 2(b)). The precise depth of constructing $\ug$ is instance-specific: it depends on $\rho$ and the choice of the projective measurement $\{P_i\}$, which determines the amount of coarse-graining. Note that this unitary implements (i) a basis change from the basis spanned by the projectors to the eigenbasis of the Hamiltonian, followed by (ii) a sequence of permutations required to arrange the estimated $\wt{p_i}$ in non-increasing order. It is possible to implement (i) as $H$ is completely known, while the set of projectors $\{P_i\}$ is chosen by the user. Also, (ii) can be efficiently executed \cite{Quek2024multivariatetrace}, with the gate depth depending on the number of permutations required to be implemented.

Our algorithm can be made more efficient in physically relevant settings. Observe that our scheme performs a measurement of $H$ on $\rho$, which might be difficult to implement. However, consider any $H=\sum_{j} h_j \Lambda_j$, which is a linear combination of $n$-qubit Pauli strings $\Lambda_j$ with total weight $h=\sum_{j}|h_j|$. For such Hamiltonians, ubiquitous across physics \cite{sachdev2011quantum}, instead of measuring $H$ ($-H$) in Step 1 (Step 2(c)) in each run of our algorithm, we can independently sample a Pauli term $\Lambda$ according to the ensemble $\{h_i/h,~\Lambda_i\}_i$ and measure $\Lambda$, and $-\Lambda$, respectively. It suffices to adopt this simple randomized measurement scheme, which enables the measurement of a single local Pauli operator in each run. In Sec.~\ref{subsec:pauli-measurement}, we prove that the modified protocol (See Protocol \ref{prot1mod} in Sec.~\ref{subsec:work-multiple-copy}) outputs $\mu$ which approximates $\ergobs{\rho}$ to $\varepsilon$-additive accuracy, with probability at least~$1-\delta$ as long as $T=O(h^2\log(1/\delta)/\varepsilon^2)$.

\subsection{Robustness of observational ergotropy}
\label{sec:robustness-ergobs}
As described earlier, estimating observational ergotropy involves a two-step process. Given an unknown quantum state $\rho$ and a set of orthogonal projectors $\{P_i\}_{i=1}^{M}$, satisfying $\sum_{i} P_i=\mathbb{I}_N$, our algorithm first estimates $p_i=\Tr[P_i\rho]$, for all $i\in [1,M]$, up to $\varepsilon$-additive accuracy. That is, we obtain estimates $\widetilde{p}_i$ such that 
$$|\widetilde{p}_i-p_i|\leq \max_{i\in[1,M]}|\widetilde{p}_i-p_i|=: \varepsilon,$$ 
for all $i\in [1,M]$. This estimation step is critical because the construction of the global unitary $U_{\mathrm{gl}}$ relies on the correct ordering of the probabilities $\{p_i\}$. In fact, these probabilities need to be arranged in a nonincreasing order. Therefore, to construct the correct $U_{\mathrm{gl}}$,  it is essential that the estimation accuracy preserves this ordering: for all $i,j \in [1, M]$, if $p_i\geq p_j$, then it must also hold that $\wt{p_i}\geq \wt{p_j}$. In Section~\ref{subsec:work-multiple-copy}, we show that this ordering is guaranteed to be preserved if the estimation error satisfies $\varepsilon<\Delta/2$, where $\Delta= \min_{i,j}\{|p_i-p_j|: p_i\neq p_j\}$. However, in realistic black-box scenarios, the true probabilities $\{p_i\}$ are unknown, and so is the minimum gap $\Delta$. As a result, the experimentalist must choose a precision $\varepsilon$ without knowing whether it suffices to preserve the correct ordering. If $\varepsilon$ is too large, the estimated ordering may differ from the true ordering, resulting in the use of an incorrect global unitary $\wt{U}_{\mathrm{gl}}$ instead of $U_{\mathrm{gl}}$. In such cases, Protocol \ref{prot1} estimates a \textit{guessed observational ergotropy}, defined as
\begin{equation}
    \label{eq:guess-erg-def}
    \wt{\mathrm{Erg}}_{\mathrm{obs}}(\rho)=\Tr\left[H\left(\rho-\wt{U}_{\mathrm{gl}}~\wt{\rho}~\wt{U}_{\mathrm{gl}}^{\dagger} \right)\right],
\end{equation}
where $\wt{\rho}$ is the coarse-grained state (see Eq. \eqref{eq:randomized-state}). In what follows, we derive an upper bound on the deviation between the guessed and the true values of observational ergotropy.

To this end, let us redefine the coarse-grained state 
$$
\wt{\rho}=\bigoplus_{i=1}^{M}\dfrac{p_iP_i}{d},
$$
such that $p_1\geq p_2\geq \cdots \geq p_M$. Here $p_i$ are the \emph{true} block probabilities, arranged in a non-decreasing order. Suppose the eigenvalues of the Hamiltonian (ordered nondecreasingly) are $E_1\leq E_2\leq \cdots \leq E_N$. Then, we partition this ordered list of eigenvalues into $M$ contiguous blocks $A_1, A_2, \cdots, A_M$, each of size $d=2^r$, such that
$$
A_j=\{E_{(j-1)d+1}, E_{(j-1)d+2}, \cdots, E_{jd}\}.
$$
Each block $A_j$ defines a projector $P_j$ onto the span of the eigenstates corresponding to the eigenvalues in $A_j$. Then, the average energy of block $A_j$ is defined as
\begin{equation}
    \label{eq:block-avg-energy}
    \overline{E}_j=\dfrac{1}{d}\sum_{e\in A_j} e.
\end{equation}
Thus, the set $\{p_j, \overline{E}_j\}$ provides the coarse-grained description of $\rho$ in the energy eigenbasis, which serves as the input for constructing the global unitary $\ug$, and for defining the true observational ergotropy.

Now the experimentalist obtains the \textit{estimated} block probabilities $\wt{p}_1, \cdots, \wt{p}_M$, i.e.\ each $\wt{p}_i$ is associated with the block labeled by $P_i$. Let $\pi$ be the permutation that sorts these estimated probabilities:
$$
\wt{p}_{\pi(1)}\geq \wt{p}_{\pi(2)} \geq \cdots \wt{p}_{\pi(M)}.
$$
Consequently, the global unitary used by the experimentalist is constructed from the sorted estimated probabilities as $\wt{U}_{\mathrm{gl}}$, which maps the eigenvectors of $P_{\pi(i)}$ to the energy block $A_i$. Let $p_{\pi(i)}$ denote the \emph{true} probability of the block $P_{\pi(i)}$ which the experimentalist believes to be the $i$-th largest. Then, we prove the following lemma.

\begin{lemma}
\label{lem:guess-ergobs-ergobs-diff}
    For any density operator $\rho$, let $p_i=\Tr[\rho P_i]$ for $i\in[1,M]$ be the true measurement probabilities and $\{\wt{p}_i\}_{i=1}^M$ be the estimated probabilities. Let $\pi$ be a permutation on $M$ objects such that for $i\in[1,M-1]$, 
    $$\wt{p}_{\pi(i)}\geq \wt{p}_{\pi(i+1)}.$$ 
    Then,
    \begin{align}
    \label{eq:bound-good}
     \left|\wt{\mathrm{Erg}}_{\mathrm{obs}}(\rho)-\ergobs{\rho}\right|= \left|\sum_{j=1}^M \left(p_j-p_{\pi(j)}\right)\overline{E}_j\right|\leq \|H\|\sum_{j=1}^M\left|p_j-p_{\pi(j)}\right|,
    \end{align}
where $\overline{E}_j$ is defined as in Eq.~\eqref{eq:block-avg-energy}.
\end{lemma}
The proof of Lemma~\ref{lem:guess-ergobs-ergobs-diff} is provided in Sec.~\ref{sub:proofs-lems}, while in Appendix \ref{append:ex-rob}, we provide an example that complements our theoretical findings. The lemma illustrates how inaccuracies in estimating the block probabilities directly propagate into deviations of the \textit{guessed observational ergotropy} from the true value. Specifically, the deviation is governed solely by the mismatch between the true probabilities and their permuted counterparts under the estimated ordering, weighted by the block energies. This makes clear that the “penalty” for implementing an incorrect global unitary is controlled: the discrepancy can never exceed the operator norm of the Hamiltonian multiplied by the total probability mass misassigned due to sorting errors. Consequently, even when the estimation precision $\varepsilon$ is too coarse to guarantee the correct ordering, one can still obtain a rigorous upper bound on the possible loss (or gain) in the extracted work. Importantly, the lemma also shows that the \textit{guessed observational ergotropy} is not necessarily smaller than the true value. Instead, it provides an explicit quantification of the worst‐case deviation that an experimentalist should expect, thereby shedding light on the robustness of our protocol under imperfect probability estimation.

\section{Preliminaries}
\label{prelim}
In this section, we introduce standard notation, definitions, and facts that have been utilized to derive our main results. 

Recall that $N=2^n$, the Hilbert space of $n$ qubits is denoted by $\mathcal{H}_N$ and the set of density operators, trace one positive semi-definite matrices, on $\mathcal{H}_N$ is denoted by $\mathcal{D}_N$. Further, define $[s]=\{1,\cdots,s\}$. The Hamiltonian of the $n$ qubit system, denoted by $ {H}$, satisfies
\begin{align}
\label{eq:full-hamiltonian1}
     {H} = \sum_{a=1}^s E_a\sum_{\kappa=1}^{g_a}\ket{E_a;\kappa}\bra{E_a;\kappa}:= \sum_{a=1}^s E_a  {Q}_a,
\end{align}
where $ {Q}_a:=\sum_{\kappa=1}^{g_a}\ket{E_a;\kappa}\bra{E_a;\kappa}$ is the projector onto the eigensubspace of energy $E_a$, $g_a$ is the degeneracy of the energy eigenvalue $E_a$, and $E_a < E_{a+1}$ for all $a\in[s-1]$. $s$ is the total number of distinct eigenvalues of $ {H}$ and $\sum_{a=1}^s g_a=N$. Let us further define an ordering on the eigenstates, for convenience, as follows: $\ket{E_1;1}\leq \cdots\leq \ket{E_1;g_1}\leq  \cdots\leq \ket{E_s;1}\leq\cdots\leq \ket{E_s;g_s}$. Denote by $\left(\ket{\phi_1},\cdots, \ket{\phi_{g_1}},\cdots, \ket{\phi_{N-g_s+1}},\cdots,\ket{\phi_N}\right)$ the ordered tuple of eigenvectors arranged in order of nondecreasing energy. Here we have partitioned $N$ as $\{g_1,\cdots, g_s\}$ such that $\sum_{a=1}^s g_a=N$. In particular, the set $$\left\{\ket{\phi_{1+\sum_{a=1}^{l-1} g_a}} ,\cdots,\ket{\phi_{\sum_{a=1}^{l} g_a}}\right\}$$  spans the eigenspace of energy $E_l$, where $l\in[s]$.

\medskip
\noindent
{\bf Ergotropy:~}Let us consider a quantum system of $n$ qubits with the Hamiltonian $H$ (Eq.~\eqref{eq:full-hamiltonian1}) in an arbitrary state $\rho\in \mathcal{D}_N$. Note that the extractable work is a path/process-dependent quantity; the extractable work from any quantum system can be defined in multiple ways depending on the quantum operations used for work extraction (see, e.g., Refs .~\cite {Lenard78, Michal2013, Narasimhachar2015, Singh2019}). Ergotropy captures the notion of extractable work under unitary operations. As mentioned previously, for a quantum system in a state $\rho$ \cite{Lenard78, Allahverdyan2004}, 
\begin{align}
\label{eq:erg-def-first1}
    \erg{\rho}&=\max_{U\in \mathsf{U}(N)}\Tr\left[H\left(\rho-U\rho U^{\dag}\right)\right],
\end{align}
where $\mathsf{U}(N)$ is the group of $N\times N$ unitary matrices. Let the spectral decomposition of $\rho$ be given by $\rho=\sum_{i=1}^N\lambda_i\ket{\lambda_i}\bra{\lambda_i}$, where $\lambda_i \geq 0$ and $\sum_{i=1}^N\lambda_i=1$. Let $\pi$ be the permutation of $N$ objects such that $\lambda_{\pi(i)}\geq \lambda_{\pi(j)}$ for all $1\leq i\leq j\leq N$. Then the unitary matrix $\vg \in \mathsf{U}(N)$ achieving the maximum in Eq. \eqref{eq:erg-def-first1} can be written as
\begin{align}
\label{eq:one-way-unitary}
    &\vg = \sum_{a=1}^s\sum_{\kappa=1}^{g_a}\ket{\phi_{\kappa+\sum_{l=1}^{a-1}g_l}}\bra{\lambda_{\pi\left({\kappa+\sum_{l=1}^{a-1}g_l}\right)}}.
\end{align}
Thus, ergotropy of the state $\rho$ can be written as
\begin{align}
   \erg{\rho}&=\Tr\left[H\left(\rho-\vg \rho \vgdag\right)\right].
\end{align}
For a nondegenerate Hamiltonian $H=\sum_{i=1}^N E_i\ket{E_i}\bra{E_i}$ (with $E_1 <\cdots<E_N$), the above expression simplifies to
\begin{align}
    \erg{\rho}&=\sum_{i,j=1}^N E_i \lambda_j\left(|\bra{\lambda_j}E_i\rangle|^2 - \delta_{j,\pi(i)}\right).
\end{align}

\medskip
\noindent
{\bf Observational ergotropy:~}For a physical system described by Hamiltonian $H$ in $\mathcal{H}_N$, the observational ergotropy of a state $\rho\in \mathcal{D}_N$ is defined with respect to a measurement setting. Consider the projective measurement $\{P_i\}_{i=1}^M$ described by the projectors $P_i$ such that $\sum_{i=1}^M P_i=\mathbb{I}_N$, and  $\mathcal{H}_i=P_i\mathcal{H}_NP_i$, for all $i\in [M]$. Then, the observational ergotropy of $\rho$ is defined as
\begin{align}
    \ergobs{\rho} &= \Tr\left[H \rho\right] - \min_{U\in \mathsf{U}(N)}\Tr\left[H U~ \wt{\rho} ~U^{\dag}\right]\nonumber\\
    &= \Tr\left[H \rho\right] - \Tr\left[H \ug~\wt{\rho}~\ugdag\right],
\end{align}
where $\wt{\rho}=\sum_{i=1}^M p_i P_i/\Tr[P_i]$, with $p_i=\Tr[\rho P_i]$ for $i\in[M]$. If $\Tr[P_i]$ is the same for all $i$, $\ug$ is a global unitary that implements (i) a basis change from $\oplus_{i=1}^{M} \mathcal{H}_i$ to the eigenbasis of $H$, followed by (ii) a permutation that orders the $p_i$'s in non-increasing order. We refer the readers to Refs.~\cite{Safranek2019, Safranek2021, vsafranek2023work} for bounds on observational ergotropy and other details.

\medskip
\noindent
\textbf{Induced measures on the space of mixed states:~}The notion of random states requires a probability measure to be associated with the set of states. For an $N$-dimensional system with Hilbert space $\mc{H}_N$, a random pure state $\ket{\psi}$ may be written as $\ket{\psi} = U\ket{\psi_0}\in \mc{H}_N$, where $\ket{\psi_0}$ is a fixed pure state and $U$ is an $N\times N$ unitary matrix chosen uniformly from the unique Haar measure $\mu_{\mathrm{Haar}}(U)$ on the unitary group $\mathsf{U}(N)$ (see e.g. Refs. \cite{Jones1990, Zyczkowski2001}).
However, the set $\mc{D}_N$ of density matrices (nonnegative and trace one $N\times N$ matrices) does not admit any unique probability measure on it, and in fact, there exist several inequivalent measures on $\mc{D}_N$. For example, one may generate a random density matrix $\rho \in \mc{D}_N$ by partial tracing the purifications $\ket{\psi}\in\mc{H}_N\otimes \mc{H}_{N'}$  chosen uniformly from the unique Haar measure on the pure states in $\mc{H}_N\otimes \mc{H}_{N'}$, where $N'\geq N$. This procedure induces the probability measure $d\nu_{N,N'} (\Lambda)$ on the eigenvalues
$\{\lambda_1,\ldots,\lambda_N\}$ of $\rho$, which is given by \cite{Zyczkowski2001, Zyczkowski2003}
\begin{eqnarray}
d\nu_{N,N'} (\Lambda) := C_{N,N'}~
\delta\left(1-\sum^N_{j=1}\lambda_j\right)\abs{\Delta(\lambda)}^{2}\prod^N_{j=1}\lambda^{N'-N}_j d\lambda_j,
\end{eqnarray}
where $C_{N,N'}$ is the normalization constant and  $\Delta(\lambda)=\prod_{1\leqslant i<j\leqslant
N}(\lambda_i-\lambda_j)$. The eigenvectors of $\rho$ are distributed via the unique Haar measure on $\mathsf{U}(N)$. The normalized probability measure $\mu_{N,N'}$ over $\mc{D}_N$ is then given by
\begin{eqnarray}
\label{eq:meas-purification}
d\mu_{N,N'}(\rho) =
d\nu_{N,N'}(\Lambda)\times d\mu_{\mathrm{Haar}}(U),
\end{eqnarray}
where $\rho=U\Lambda U^{\dagger}$, $d\nu_{N,N'}(\Lambda)\equiv d\nu_{N,N'}(\lambda_1,\ldots,\lambda_N)$.

Another way to obtain a probability measure on $\mc{D}_N$ is to note that any density matrix $\rho$, using the spectral decomposition theorem, can be written as $\rho=U\Lambda U^{\dagger}$, where $U$ is an $N\times N$ unitary matrix and $\Lambda=\mathrm{diag}\left(\lambda_1,\ldots,\lambda_N\right)$ is a diagonal matrix with nonnegative real entries that sum to one. Then assuming that the distributions of eigenvalues and eigenvectors of $\rho\in \mc{D}_N$ are independent, one may propose a probability measure $\mu$ on $\mc{D}_N$ as $\mu(\rho)=\nu(\Lambda)\times\mu_{\mathrm{Haar}}(U)$, where the measure $\mu_{\mathrm{Haar}}(U)$ is the unique Haar measure on $\mathsf{U}(N)$ and measure $\nu(\Lambda)$ defines the distribution of eigenvalues (see e.g. Refs. \cite{Hall1998, Zyczkowski2001}). The measure $\nu(\Lambda)$ is not unique and one may pick it based on different metrics on the Hilbert space \cite{Slater1999}. In particular, for Hilbert-Schmidt metric, $d\nu(\Lambda)$ coincides with $d\nu_{N,N}(\Lambda)$ given in Eq. \eqref{eq:meas-purification} \cite{Zyczkowski2001}. In this work, we will focus only on the partial trace induced measure on the set of density matrices.

\medskip
\noindent
{\bf Lipschitz continuity:~} Let $X$ and $Y$ be two metric spaces with metrics $d_X$ and $d_Y$, respectively and $h : X \mapsto Y$ be a function on $X$. Then $h$ is said to be a Lipschitz continuous function with the Lipschitz constant $L$ if for any $x,x'\in X$
\begin{align}
    d_Y\left(h(x)-h(x')\right) \leq L~ d_X(x,x').
\end{align}
Note that any other constant $L'\geq L$ is also a valid Lipschitz constant \cite{Milman1986, Ledoux2005, Searcoid2007}. 
%

\medskip
\noindent
{\bf Concentration of measure phenomenon:~}Certain smooth functions defined over measurable vector spaces are known to take values close to their
average values almost surely \cite{Ledoux2005}. This phenomenon is collectively referred to as the concentration of measure phenomenon and is expressed via various inequalities bounding the probability of departure of the functional values from the average value. Logarithmic Sobolev inequalities together with the Herbst argument (also called the “entropy method”) are very general techniques to prove such inequalities \cite{Raginsky2013}. A particular form of the concentration inequality suitable for our exposition is stated as the following lemma \cite{Milman1986, Ledoux2005, Hayden2006, Meckes2013}.
\begin{lemma}[Concentration of measure phenomenon]
\label{lem:levy's lemma}
Let $\mathsf{U}(k)$ be the group of $k\times k$ unitary matrices equipped with the Hilbert-Schmidt metric. Let $h:\mathsf{U}(k)\mapsto \mathbb{R}$ be a real valued Lipschitz continuous function with a Lipschitz constant $L$. Then for any $\gamma>0$, we have 
    \begin{align}
    \Pr\left(h(U)>\mathbb{E}_{\mathsf{U}(k)}[h(U)]+ \gamma \right)\leq \exp\left[-\frac{k\gamma^2}{12 L^2}\right].
\end{align}
\end{lemma}

\section{Methods}
\label{sec:methods}
Having outlined the necessary preliminaries, we are now in a position to describe the techniques used to derive the main results in Sec.~\ref{sec:main-results}. We begin with a detailed derivation of Theorem \ref{thm:impossibility-work}.

\subsection{Proof of Theorem \ref{thm:impossibility-work}: Concentration of ergotropy}
\label{append-average-ergotropy}
The central idea is to show that the ergotropy of a full rank random quantum state $\rho$ concentrates around its average ergotropy, which is zero in the asymptotic limit. We develop this proof via a sequence of non-trivial steps. First, we prove an upper bound on the average ergotropy, and then show that the ergotropy functional is Lipschitz continuous. Finally, we leverage the concentration of measure phenomena to show that $\mathrm{Erg}(\rho)$ is close to its average with a high probability.

\medskip
\noindent
\textbf{Average ergotropy:~}Let $\rho\equiv \rho(U)$ be a random state in $\mc{D}_N$ obtained via partial tracing Haar random pure states on $\mc{H}_N\otimes \mc{H}_{N'}$. In particular, $\rho(U) = \Tr_B\left[U\ket{\psi_0}\bra{\psi_0}_{AB}U^{\dag}\right]$, where $\ket{\psi_0}_{AB}$ is a fixed state. Let $\{\lambda_i(U)\}_{i=1}^N$ be the eigenvalues of $\rho(U)$. Further, let $\pi$ be a permutation on $N$ objects such that $\lambda_{\pi(1)}(U)\geq \cdots\geq \lambda_{\pi(N)}(U)>0$ while $E_1\leq \cdots\leq E_N$. The ergotropy of $\rho(U)$ is given by $\erg{\rho(U)}=\max_{V\in\mathsf{U}(N)}\Tr\left[H\left(\rho(U)-V \rho(U) V^{\dag}\right)\right]$. We have
\begin{align}
   \erg{\rho(U)}&=\Tr\left[H\rho(U)\right]-\sum_{i=1}^N E_i\lambda_{\pi(i)}(U).
\end{align}
Since $\rho(U)$ is a random state, $\erg{\rho(U)}$ is a random variable. In the following lemma, we estimate the expectation value $\mathbb{E}_{\mathsf{U}(NN')}\left[\erg{\rho(U)}\right]$ of $\erg{\rho(U)}$. 
\begin{lemma}
\label{lem:avg-ergotropy}
    The average ergotropy for a full rank random quantum state sampled from the distribution $\mu_{N,N'}(\rho)$ (Eq. \eqref{eq:meas-purification}) is bounded from above by $\norm{H}\,\sqrt{N/N'}$.
\end{lemma}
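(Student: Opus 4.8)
The plan is to exploit the factorization of the induced measure, $d\mu_{N,N'}(\rho) = d\nu_{N,N'}(\Lambda)\, d\mu_{\mathrm{Haar}}(U)$, together with the linearity of expectation and the fact that ergotropy splits into an energy term and a passive-state term. Writing $\rho = U\Lambda U^{\dag}$ and letting $\ug$ denote the unitary that maps $\rho$ to its passive state, I would start from
\begin{align}
\mathbb{E}[\erg{\rho}] = \mathbb{E}\big[\Tr[H\rho]\big] - \mathbb{E}\big[\Tr[H\,\ug\,\rho\,\ugdag]\big],
\end{align}
and aim to show that each expectation separately equals $\Tr[H]/N$, so that the two cancel.

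The first term is the easy one. Because the eigenvectors of $\rho$ are Haar-distributed, the first-moment identity $\mathbb{E}_{U}[U\Lambda U^{\dag}] = (\Tr\Lambda/N)\,\mathbb{I}_N = \mathbb{I}_N/N$ holds for every fixed spectrum $\Lambda$; averaging over $d\nu_{N,N'}$ as well gives $\mathbb{E}[\rho]=\mathbb{I}_N/N$. Hence $\mathbb{E}[\Tr[H\rho]] = \Tr[H\,\mathbb{I}_N/N] = \Tr[H]/N$, independently of the eigenvalue distribution.

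The second term is the crux and where the main obstacle sits. The naive move is to pull $\ug$ out and write $\mathbb{E}[\ug\rho\ugdag] = \ug(\mathbb{I}_N/N)\ugdag = \mathbb{I}_N/N$, but this is illegitimate: the optimal $\ug$ is itself a function of $\rho$, both through the Haar-random eigenbasis and through the sorting permutation in Eq.~\eqref{eq:one-way-unitary}, so it is correlated with $U$ and cannot be treated as a fixed rotation. The correct route is to use that $\Tr[H\,\ug\rho\ugdag]$ is the passive energy, which depends only on the spectrum: it equals $\sum_k E_k^{\uparrow}\lambda_k^{\downarrow}$, the minimal energy obtained by pairing the sorted energies $E_k^{\uparrow}$ with the oppositely ordered eigenvalues $\lambda_k^{\downarrow}$. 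This is deterministic once $\Lambda$ is fixed, so the Haar average over $U$ is trivial and one is left with $\mathbb{E}_{\Lambda}\big[\sum_k E_k^{\uparrow}\lambda_k^{\downarrow}\big] = \sum_k E_k^{\uparrow}\,\mathbb{E}[\lambda_k^{\downarrow}]$.

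The remaining step is to show this also equals $\Tr[H]/N$, and this is the part I expect to be hardest. The usable input is the permutation symmetry of $d\nu_{N,N'}$ (a symmetric function of the $\lambda_j$ times $\abs{\Delta(\lambda)}^{2}$), which makes the \emph{unsorted} eigenvalues exchangeable with $\mathbb{E}[\lambda_j]=1/N$. I would then try to argue that, once one accounts for the sorting built into the passive state together with the freedom in how the Haar measure assigns eigenvectors to energy levels, the ordered contributions symmetrize so that $\sum_k E_k^{\uparrow}\,\mathbb{E}[\lambda_k^{\downarrow}]$ collapses to $\frac{1}{N}\sum_k E_k = \Tr[H]/N$. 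Reconciling the deterministic, spectrum-dependent sorting in the passive state with the exchangeability of the underlying eigenvalue measure is the genuine difficulty; making that cancellation precise is exactly what forces the two terms to agree and yields $\mathbb{E}[\erg{\rho}]=0$.
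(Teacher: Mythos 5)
Your decomposition is sound as far as it goes: $\mathbb{E}[\rho]=\mathbb{I}_N/N$ gives $\mathbb{E}[\Tr[H\rho]]=\Tr[H]/N$, the passive energy $\Tr[H\,\ug\,\rho\,\ugdag]=\sum_k E_k^{\uparrow}\lambda_k^{\downarrow}$ is indeed a spectrum-only functional, and you are right that $\ug$ cannot be pulled through the Haar average. But the step you defer --- showing $\sum_k E_k^{\uparrow}\,\mathbb{E}[\lambda_k^{\downarrow}]=\Tr[H]/N$ --- is a genuine gap, and it cannot be closed: exchangeability gives $\mathbb{E}[\lambda_j]=1/N$ only for the \emph{unsorted} eigenvalues, and sorting destroys this. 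Pointwise in $\Lambda$, the rearrangement inequality gives
\begin{align}
\sum_{k} E_k^{\uparrow}\lambda_k^{\downarrow}\;\le\;\frac{1}{N!}\sum_{\sigma}\sum_k E_k\,\lambda_{\sigma(k)}\;=\;\frac{\Tr[H]}{N},
\end{align}
with equality only if all $\lambda_k$ coincide or $H\propto\mathbb{I}_N$. Since $d\nu_{N,N'}$ carries the factor $\abs{\Delta(\lambda)}^{2}$, the spectrum is almost surely nondegenerate, so the inequality is almost surely strict and $\mathbb{E}[\erg{\rho}]=\Tr[H]/N-\mathbb{E}\bigl[\sum_k E_k^{\uparrow}\lambda_k^{\downarrow}\bigr]>0$ for every finite $N'$. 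Concretely, for $N=N'=2$ and $H=\mathrm{diag}(0,1)$ one finds $\mathbb{E}[\erg{\rho}]=\tfrac12-\mathbb{E}[\lambda_{\min}]=3/8$. The identity you would need is therefore false; the expectation only tends to zero as $N'\to\infty$, where the induced measure concentrates on the maximally mixed state.

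For comparison, the paper's proof expands $\erg{\rho}=\sum_i\lambda_i\Tr[H(UR_iU^{\dag}-\ug UR_iU^{\dag}\ugdag)]$ and invokes left invariance of the Haar measure to replace $\ug UR_iU^{\dag}\ugdag$ by $UR_iU^{\dag}$ under the integral. That substitution is legitimate only for a $U$-independent $\ug$; here $\ug$ is built from the eigenvectors of $\rho$ (indeed $\ug U$ is a deterministic unitary fixed by $\Lambda$ and $H$, not Haar distributed), so the paper performs exactly the interchange you correctly flagged as illegitimate. In other words, your route is the right way to analyze this expectation --- it simply yields a strictly positive answer rather than zero. If the aim is to support the concentration theorem, the defensible statement is a quantitative bound of the form $0\le\mathbb{E}[\erg{\rho}]\le c\,\norm{H}\sqrt{N/N'}$, obtained by controlling $\mathbb{E}\bigl[\lambda_k^{\downarrow}-\tfrac{1}{N}\bigr]$, not an exact vanishing of the mean.
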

\begin{proof}
Let us denote $\mathbb{E}_{\mathsf{U}(NN')}$ by $\mathbb{E}$ for the ease of notation. Using $\mathbb{E}[\rho(U)]=\mathbb{I}_N/N$ (see e.g. \cite{Zyczkowski2001}), we have
\begin{align}
\mathbb{E}\left[\erg{\rho(U)}\right]&=\sum_{i=1}^N  E_i\, \mathbb{E} \left[\delta_i(U)\right],
\end{align}
where $\delta_i(U)=1/N-\lambda_{\pi(i)}(U)$. From Cauchy-Schwarz inequality, we have
\begin{align}
\mathbb{E}\left[\erg{\rho(U)}\right]&\leq \sqrt{\sum_{i=1}^N  E_i^2}\,\sqrt{\sum_{i=1}^N\left(\mathbb{E} \left[\delta_i(U)\right]\right)^2}\nonumber\\
&\leq \sqrt{N}\norm{H}\sqrt{\sum_{i=1}^N\left(\mathbb{E} \left[\delta_i(U)\right]\right)^2}\nonumber\\
&\leq \sqrt{N}\norm{H}\sqrt{\mathbb{E} \left[\sum_{i=1}^N \delta^2_i(U)\right]},
\end{align}
where, in the last line, we applied Jensen's inequality. Noting that $\mathbb{E} \left[\sum_{i=1}^N \delta^2_i(U)\right] = \mathbb{E}\left[\sum_{i=1}^N\lambda^2_{\pi(i)}(U)\right] - 1/N = \mathbb{E}\left[\Tr\left[\rho(U)^2\right]\right]-1/N$ and  $\mathbb{E}\left[\Tr(\rho^2)\right] = \left(N + N'\right)/\left(N N' + 1\right)$ \cite{Zyczkowski2001}, we obtain
\begin{align}
\mathbb{E}\left[\erg{\rho(U)}\right]
&\leq \norm{H}\sqrt{\frac{N}{N'}}.
\end{align}
This concludes the proof of the lemma.
\end{proof}

\medskip
\noindent
\textbf{Lipschitz continuity of Ergotropy:~}Next, we prove that $\erg{\rho(U)}$ is a Lipschitz continuous function. Note that this was recently proven in Ref.~\cite{hovhannisyan2024concentration}. We prove a slightly different version of the same result here via the following Lemma:
\begin{lemma}
\label{lem:Lipschitz-ergotropy}
    The ergotropy of quantum states is a Lipschitz continuous function with Lipschitz constant $L=4 \norm{H}$. More specifically, let $\rho (U), \rho(V) \in \mc{D}_N$ such that $\rho(U)=\Tr_B\left[\ket{\psi(U)}\bra{\psi(U)}_{AB}\right]$ and $\rho(V)=\Tr_B\left[\ket{\psi(V)}\bra{\psi(V)}_{AB}\right]$, where $B$ is an $N'$-dimensional system and $\ket{\psi(U)}_{AB}=U\ket{\psi_0}$. Here $\ket{\psi_0}$ is a fixed normalized state and  $U$ and $V$ are unitaries. Then we have
\begin{align}
\left|\erg{\rho(U)}-\erg{\rho(V)}\right|\leq 4\norm{H}~\norm{U-V}_2,
\end{align}
where $\norm{\cdot}_2$ is the Hilbert-Schmidt norm.
\end{lemma}
\begin{proof}
Let $\{\lambda_i(\rho(U))\}$ and $\{\lambda_i(\rho(V))\}$ be the spectrum the of states $\rho(U)$ and $\rho(V)$. Let $\lambda_1^{\downarrow}(\rho(U)) \geq \cdots\geq \lambda_N^{\downarrow}(\rho(U))$ and $\lambda_1^{\downarrow}(\rho(V)) \geq \cdots\geq \lambda_N^{\downarrow}(\rho(V))$. Then the ergotropy of $\rho(U)$ and $\rho(V)$ is given, respectively, by
\begin{align}
    &\mathrm{Erg}(\rho(U)) = \mathrm{Tr}\left[H\rho(U)\right] - \sum_{i=1}^N E_i \lambda_i^{\downarrow}(\rho(U)),\nonumber\\
    &\mathrm{Erg}(\rho(V)) = \mathrm{Tr}\left[H\rho(V)\right] - \sum_{i=1}^N E_i \lambda_i^{\downarrow}(\rho(V)),\nonumber
\end{align}
where $H = \sum_{i=1}^N E_i \,\ket{E_i}\bra{E_i}$ with increasing energies $E_1 \leq \cdots \leq E_N$. Then
\begin{align}
\label{eq:b11}
\left|\mathrm{Erg}(\rho(U)) - \mathrm{Erg}(\rho(V))\right| &\leq \left|\Tr(H(\rho(U) - \rho(V)))\right| + \left|\sum_i E_i (\lambda_i^{\downarrow}(\rho(U)) - \lambda_i^{\downarrow}(\rho(V)))\right| \nonumber\\
&\leq \norm{H} \, \norm{\rho(U) - \rho(V)}_1 + \norm{H} \, \norm{\lambda^{\downarrow}(\rho(U)) - \lambda^{\downarrow}(\rho(V))}_1,
\end{align}
where $\lambda^{\downarrow}(\rho(U))$ is the vector with components $\lambda_1^{\downarrow}(\rho(U)) \geq \cdots\geq \lambda_N^{\downarrow}(\rho(U))$. Here $\norm{\cdot}_1$ is the trace norm. Applying the trace-norm version of Lidskii's inequality \cite{Bhatia1997}, we have
\begin{align}
\label{eq:b12}
\norm{\lambda^{\downarrow}(\rho(U)) - \lambda^{\downarrow}(\rho(V))}_1 \leq \norm{\rho(U) - \rho(V)}_1.
\end{align}
Combining Eqs. \eqref{eq:b11} and \eqref{eq:b12}, we have
\begin{align*}
|\mathrm{Erg}(\rho(U)) - \mathrm{Erg}(\rho(V))| &\leq 2\norm{H}\,\norm{\rho(U) - \rho(V)}_1\nonumber\\
&\leq 2\norm{H}\,\norm{\ket{\psi(U)}\bra{\psi(U)} - \ket{\psi(V)}\bra{\psi(V)}}_1,
\end{align*}
where in the last line we used the contractivity of the trace norm under partial trace. Note that
\begin{align}\label{2norm}
\norm{\ket{\psi(U)}\bra{\psi(U)} - \ket{\psi(V)}\bra{\psi(V)}}_1 = 2\sqrt{1 - |\bra{\psi(U)}\psi(V)\rangle|^2} \leq 2 \norm{\ket{\psi(U)} - \ket{\psi(V)}}_2.
\end{align} 
Using $\norm{\ket{\psi(U)} - \ket{\psi(V)}}_2 = \norm{(U - V)\ket{\psi_0}}_2 \leq \norm{U - V}_2$, we get
\begin{align*}
    \left|\mathrm{Erg}(\rho(U)) - \mathrm{Erg}(\rho(V))\right| \leq 4\norm{H} \, \norm{U - V}_2.
\end{align*}
This concludes the proof of the lemma.
\end{proof}
Finally, we prove the main result of this section.
\ \\
\textit{Proof of Theorem~\ref{thm:impossibility-work}}.-- From Lemma \ref{lem:avg-ergotropy}, we have that $\mathbb{E}[\mathrm{Erg}(\rho(U))]\leq \norm{H}\,\sqrt{\frac{N}{N'}}$. Then for $\gamma> 2\norm{H}\,\sqrt{\frac{N}{N'}}$, we have $\gamma/2 > \mathbb{E}[\mathrm{Erg}(\rho(U))]$. Now note that 
\begin{align}
\label{eq:bound-a}
    \Pr\left[\mathrm{Erg}(\rho(U))>\gamma \right] \leq  \Pr\left[\mathrm{Erg}(\rho(U))>\gamma/2 + \mathbb{E}\left[\mathrm{Erg}(\rho(U))\right]\right].
\end{align}
From Lemma \ref{lem:levy's lemma} and Lemma \ref{lem:Lipschitz-ergotropy}, we have
\begin{align}
\label{eq:bound-b}
    \Pr\left[\mathrm{Erg}(\rho(U))>\gamma/2 + \mathbb{E}\left[\mathrm{Erg}(\rho(U))\right]\right] \leq \exp\left[-\frac{NN'\gamma^2}{768 \,\norm{H}^2}\right].
\end{align}
Using Eq. \eqref{eq:bound-b} in Eq. \eqref{eq:bound-a}, we complete the proof of the theorem.

\subsection{Extracting work from multiple copies of $\rho$: proof of Lemma \ref{lem:sample-prob-with-error}}
\label{subsec:proof-multiple-copies}
In this section, we prove Lemma \ref{lem:sample-prob-with-error}. This requires obtaining the number of samples $\wt{T}$ required to estimate the probabilities $\wt{p}_i$ to arbitrary additive precision. Furthermore, as discussed in Sec.~\ref{subsec:work-multiple-copy}, the appropriate choice of the precision parameter is also crucial: ordering the estimated probabilities $\wt{p}_i$'s should not alter the original ordering of the $p_i$'s. This affects the number of samples $\wt{T}$. We first obtain $\wt{T}$ as a function of any arbitrary $\varepsilon >0$, and then substitute the appropriate value of $\varepsilon$ based on our requirement.

\medskip
\noindent
\textbf{Probability estimation:~}Let us consider independent and identically distributed (i.i.d.) random variables $Y_1, \cdots Y_{\wt{T}}$ which satisfy $a_j\leq Y_j\leq b_j$. Let, for all $j\in\left[\wt{T}\right]$, $\mathbb{E}\left[Y_j\right]:=\mathbb{E}\left[Y\right]$. For such i.i.d. random variables, for all $\varepsilon >0$, Hoeffding's inequality \cite{Hoeffding1963, Boucheron2013} states that
\begin{align}
    \Pr\left[\left|\frac{1}{\wt{T}}\sum_{j=1}^{\wt{T}}Y_j-\mathbb{E}[Y]\right| \geq \varepsilon \right]\leq 2\exp\left[-\frac{2\wt{T}^2\varepsilon^2}{\sum_{j=1}^{\wt{T}}(b_j-a_j)^2}\right].
\end{align}
Let $\rho$ be an unknown $n$-qubit quantum state, and we want to estimate probabilities of outcomes of a projective measurement given by $\left\{P_i\right\}_{i=1}^M$ such that $P_i^2=P_i$ and $P_i\geq 0$ for all $i\in[M]$. Further, $\sum_{i=1}^MP_i=\mathbb{I}_N$, where $N=2^n$. Suppose we perform this projective measurement $\wt{T}$ times. At each instance of measurement, the outcome is one of the indices $i\in[M]$. Let us define an indicator function $I_i$, which is one if the measurement outcome is $i$ and zero otherwise. During $j$th repetition of the measurement, we store the value of $I_i$ in the random variable $Y_j^{(i)}$. The expectation value of $Y_j^{(i)}$ for each $j\in\left[\wt{T}\right]$ is given by $\mathbb{E}\left[Y_j^{(i)}\right] = p_i$, where $p_i=\Tr\left[\rho P_i\right]$ for $i\in[M]$. Then, defining $\frac{1}{\wt{T}}\sum_{j=1}^{\wt{T}}Y_j^{(i)}=\wt{p}_i$, from Hoeffding's inequality for i.i.d. random variables $Y_j^{(i)}$, for all $\varepsilon>0$ we have
\begin{align}
    \Pr\left[\left|\wt{p}_i-p_i\right| \geq \varepsilon \right]\leq 2\exp\left[-2\wt{T}\varepsilon^2\right],
\end{align}
where we used the fact that $0\leq Y_j^{(i)} \leq 1$ for each $j\in\left[\wt{T}\right]$ and $i\in[M]$. Now, for $0< \delta < 1$, by choosing $2\exp\left[-2\wt{T}\varepsilon^2\right]\leq \delta/M$ or
\begin{align}
    \wt{T} \geq \frac{1}{2\varepsilon^2}\ln \left(\frac{2M}{\delta}\right),
\end{align}
we have $\Pr\left[\left|\wt{p}_i-p_i\right| \geq \varepsilon\right] \leq \delta/M$. Thus, $\wt{T}$ samples of $\rho$ suffice to estimate $p_i$ within additive error $\varepsilon$ and maximum failure probability $\delta/M$. Now applying union bound on measurement outcomes $i\in [M]$, we have
\begin{align}
    \Pr\left[\bigcup_{i=1}^M G_i\right] \leq \sum_{i=1}^M\Pr\left[ G_i\right] \leq 2M\exp\left[-2\wt{T}\varepsilon^2\right]\leq \delta,
\end{align}
where $G_i$ denotes the event that $\left|\wt{p}_i-p_i\right| \geq \varepsilon$. This implies that $O\left(\log \left(2M/{\delta}\right)/\varepsilon^2\right)$ samples of $\rho$ are sufficient to estimate $p_i$ for all $i\in[M]$ within additive error $\varepsilon$ and maximum failure probability $\delta$.

\medskip
\noindent
{\bf Choosing the error $\mathbf{\varepsilon}$:~}As discussed in Sec.~\ref {subsec:work-multiple-copy}, the estimates $\wt{p_i}$ need to be arranged in non-increasing order for the construction of the global unitary $\ug$. It is thus necessary to ensure that the precision $\varepsilon$ is such that it does not alter the ordering of the original $p_i$'s. In other words, we intend to find error $\varepsilon$ in the estimated probabilities $\left\{\wt{p}_i\right\}_{i=1}^M$ such that $p_i - p_j \geq 0$ implies $\wt{p}_i-\wt{p}_j\geq 0$ for all pairs $i,j\in[M]$. For the equality case, i.e., $p_i=p_j$, the error incurred in estimation does not affect $\ug$. Now, for distinct $(p_i,p_j)$ pairs let us define $\Delta:=\min_{i,j}\left\{\left|p_i-p_j\right|: p_i\neq p_j\right\}$. Then for $p_i-p_j\geq 0$, we have $p_i-p_j\geq \Delta$. Using this, we have
\begin{align*}
    \left(\wt{p}_i-\wt{p}_j\right)&= \left(\wt{p}_i-p_i\right)-\left(\wt{p}_j-p_j\right) +\left(p_i-p_j\right)\\
    &\geq -2\varepsilon +\Delta.
\end{align*}
Thus, taking $\varepsilon < \Delta/2$ (e.g. taking $\varepsilon=\Delta/3$), we can ensure that the error incurred in the estimation of probabilities $p_i$ does not affect the choice of global unitary. 

The proof of Lemma \ref{lem:sample-prob-with-error} follows directly by substituting the upper bound on $\varepsilon$ into the expression for $\wt{T}$. This shows that the estimation of probabilities $p_i$ for all $i\in[M]$ with additive error $\Delta/3$ and failure probability at most $\delta$ requires only $O\left(\log (M/\delta)/\Delta^2\right)$ samples of the unknown state.

\subsection{Robustness of observational ergotropy: Proof of Lemma \ref{lem:guess-ergobs-ergobs-diff}}
\label{sub:proofs-lems}
In this subsection, we formally prove Lemma \ref{lem:guess-ergobs-ergobs-diff}. In Sec.~\ref{sec:robustness-ergobs}, we divided the eigenvalues of $H$ into contiguous blocks $A_1, A_2, \cdots A_M$, such that the average energy of each block $\overline{E}_i$ is as stated in Eq.~\eqref{eq:block-avg-energy}. Then, observational ergotropy can be written as
\begin{align*}
    \ergobs{\rho}&=\Tr\left[H~\rho\right]-\Tr\left[\ug~\wt{\rho}~\ugdag\right]\\
    &=\Tr\left[H~\rho\right]-\sum_{j=1}^{M} p_j \overline{E}_j,
\end{align*}
where $\wt{\rho}$ is the coarse grained state, and in the last line we have used the fact that
$$
\Tr\left[\ug~\wt{\rho}~\ugdag\right]=\sum_{j=1}^{M} p_j \overline{E}_j,
$$
following Eq.~\eqref{eq:block-avg-energy}.

Now, as explained in Sec.~\ref{sec:robustness-ergobs}, for estimating \emph{guessed observational ergotropy}, the experimentalist first sorts the estimated probabilities $\wt{p}_{i}$, using the permutation $\pi$. Now, the global unitary $\wt{U}_{\mathrm{gl}}$ maps the eigenvectors of $P_{\pi(i)}$ to the energy block $A_i$. So, if $p_{\pi(i)}$ is the \emph{true} probability of the block $P_{\pi(i)}$ which the experimentalist belives to be the $i$-th largest, then
\begin{align*}
\wt{\mathrm{Erg}}_{\mathrm{obs}}(\rho)&=\Tr[H~\rho]-\Tr[\wt{U}_{\mathrm{gl}}~\wt{\rho}~\wt{U}^{\dag}_{\mathrm{gl}}]\\
&=\Tr[H~\rho] - \sum_{i=1}^{M} p_{\pi(i)}\overline{E}_i.
\end{align*}
Therefore, we obtain
\begin{align*}
    \left|\wt{\mathrm{Erg}}_{\mathrm{obs}}(\rho)-\ergobs{\rho}\right|&= \left|\Tr\left[\ug~\wt{\rho}~\ugdag\right]-\Tr[\wt{U}_{\mathrm{gl}}~\wt{\rho}~\wt{U}^{\dag}_{\mathrm{gl}}]\right|\\ &=\left|\sum_{j=1}^M \left(p_j-p_{\pi(j)}\right)\overline{E}_j\right|\\
    &\leq \|H\|\sum_{j=1}^M\left|p_j-p_{\pi(j)}\right|,
\end{align*}
where the upper bound in the last line follows from the triangle inequality.

\subsection{Randomized quantum algorithm for work extraction from local Hamiltonians}
\label{subsec:pauli-measurement}
For this section, consider any $n$-qubit Hamiltonian
$$
H=\sum_{j=1}^{K}h_j \Lambda_j,
$$
where each $\Lambda_j$ is a string of $n$-qubit Pauli operators and $h=\sum_{j}|h_j|$. Without loss of generality, we assume that $h_j\in \mathbb{R}^{+}$. This is because, in general, these coefficients can be written as $h_j = e^{i\phi}c_j$, where $c_j\in \mathbb{R}^{+}$. In this case, the imaginary phase can be absorbed into the description of the Pauli operator itself, i.e., $\Lambda_j \mapsto e^{i\phi}\Lambda_j$.  

In this case, we can replace the POVM measurements of $H$ in Protocol \ref{prot1} with the measurements of randomly sampled $\Lambda_j$ which takes values in $\{-1,1\}$. These modifications simplify the measurement scheme while still ensuring that the randomized algorithm estimates observational ergotropy $\ergobs{\rho}$ to $\varepsilon$-additive accuracy, with a success probability of at least $1-\delta$. We outline the steps via Protocol \ref{prot1mod}.

\RestyleAlgo{boxruled}
\begin{protocol}
\caption{$\texttt{Work Extraction from local Hamiltonians}$}
  \label{prot1mod}
\begin{enumerate}
    \item With probability $1/2$, sample $\Lambda\sim \{h_i/h, \Lambda_i\}_{i=1}^M$ and measure $\Lambda$ on the state $\rho$.
    \item With probability $1/2$, execute steps $(a)-(c)$.
    \begin{enumerate}
        \item Measure $\rho$ in the basis spanned by the projectors $\{P_i\}_{i=1}^{M}$ and ignore the outcomes to obtain the post-measurement state $\bar{\rho}$.
        \item Sample a unitary $U_r$ uniformly at random from the $r$-qubit Pauli group, where $r=\log_2(N/M)$, and apply the unitary $\ug(\mathbb{I}_M\otimes U_r)$ to $\bar{\rho}$ to obtain 
        \begin{align*}
        \rho'=\ug(I_M\otimes U_r) \bar{\rho} (I_M\otimes U_r)^{\dag}\ugdag
        \end{align*}
        \item Sample $\Lambda\sim \{h_i/h, \Lambda_i\}_{i=1}^M$ and measure $-\Lambda$ in the state obtained after step (b) above.
    \end{enumerate}   
  \item Store the outcome of the measurement of the $j^{\mathrm{th}}$ iteration into $\mu_j$.
  \item Repeat Steps 1 and 2(a)-(c), a total of $T$ times. 
  \item Output $\mu=\dfrac{2h}{T}\sum_{j=1}^{T} \mu_j$.
\end{enumerate}
\end{protocol}

In particular, Step 1 and Step 2(c) of Protocol \ref{prot1} is modified as follows: instead of measuring $H$ (or $-H$), sample a Pauli string $\Lambda$ according to the ensemble $\{h_i/h, \Lambda_i\}_{i=1}^{M}$, and perform a POVM measurement of $\Lambda$ with respect to $\rho$ in Step 1 (or equivalently $-\Lambda$ in Step 2(c)). Note that $\mathbb{E}[\Lambda]=H/h$, and the $j^{\mathrm{th}}$ iteration of Steps 1, and Steps 2(a)-(c) in Protocol \ref{prot1mod} results in a random variable $\mu_j$ in Step 3 such that
$$
2~\mathbb{E}[\mu_j]=\dfrac{1}{h}\Tr\left[H\left(\rho-\ug~\wt{\rho}~\ug^{\dag}\right)\right].
$$
Then, the overall output of Protocol \ref{prot1mod} (Step 5) is 
$$
\mu=\dfrac{2h}{T}\sum_{j=1}^{T} \mu_j.
$$
As before, Hoeffding's inequality ensures that
$$
\mathrm{Pr}\left[\left|\mu -\ergobs{\rho} \right| \geq \varepsilon \right]\leq 2\exp\left[-T h^{-2}\varepsilon^2 /8\right].
$$
So, we have
$$
\left|\mu -\ergobs{\rho} \right| \leq \varepsilon,
$$
with probability $1-\delta$ as long as
$$
T=O\left(\dfrac{h^2\log(1/\delta)}{\varepsilon^2}\right).
$$

\section{Discussion and Future Work}\label{sec:discussion}
We consider the problem of extracting work from a physical system described by a known Hamiltonian $H$, but prepared in an unknown quantum state $\rho$, accessible only through copies provided in a black-box manner. In this paper, we determine the number of samples required to estimate the extractable work in such a scenario. We rigorously show that extracting any non-negligible amount of work from a single copy of $\rho$ is essentially impossible in the appropriate limit (see Theorem~\ref{thm:impossibility-work}). This underscores a fundamental limitation of single-shot thermodynamic protocols when the state is not known and cannot be reconstructed with sufficient precision from just one instance.

When multiple copies are available, however, we demonstrate that only $O(1/\varepsilon^2)$ samples suffice to estimate the amount of extractable work, as quantified by \emph{observational ergotropy}, with high probability and $\varepsilon$-additive accuracy. Notably, the same number of copies is required not only to estimate the ergotropy but also to meaningfully leverage the state as an energy resource---for instance, for charging a quantum battery~\cite{Andolina2018} or implementing arbitrary quantum gates in an energy-aware and efficient manner~\cite{Chiribella2021}.

Observational ergotropy depends intricately on the choice of measurement and, in certain cases, may even be non-positive. While it is theoretically possible to maximize this quantity over all measurement settings to ensure positivity, doing so is practically infeasible due to constraints on the number of state copies and circuit complexity. Consequently, we focus on the practically relevant setting where only a single quantum measurement is allowed. How the sample complexity and accuracy of ergotropy estimation would be affected if multiple or adaptive measurement settings were permitted is left open in our work. Exploring such scenarios could yield improved protocols for estimating extractable work or for identifying near-optimal measurement bases with minimal overhead.

To achieve sample efficiency, our quantum algorithm employs a subtle combination of randomized sampling and a single quantum measurement, ensuring that the entire procedure remains \emph{qubit-efficient}. This feature is especially important given the current limitations of quantum hardware. This is non-trivial as other ways of estimating observational ergotropy can be inefficient. In fact, in Sec.~\ref{append-algo-single ancilla} of the Appendix, we illustrate how more elaborate primitives, such as the Linear Combination of Unitaries (LCU) \cite{childs2012hamiltonian, Chakraborty2024implementingany}, can be integrated into our algorithmic framework to estimate extractable work. While such methods provide an alternative route to estimating extractable work, they introduce significant overhead by increasing the sample complexity exponentially and requiring additional ancilla qubits. This highlights a key trade-off between algorithmic generality and physical resource efficiency.

Beyond efficiency, we also provide rigorous bounds that quantify the robustness of observational ergotropy estimation in the presence of realistic errors. In particular, we show how imperfections in measurement statistics or deviations in average energy values translate into variations in the estimation. These bounds not only certify the reliability of our protocol under errors but also clarify the regimes where observational ergotropy provides a faithful proxy for true extractable work.

Our work thus paves the way for using quantum computers to estimate thermodynamically meaningful quantities of unknown quantum systems in a manner that is both resource-aware and experimentally viable. Going forward, several promising research directions emerge from our study. One natural extension is to explore adaptive or multi-basis measurement strategies, where outcomes from previous measurements guide the choice of future ones. Such feedback-based approaches could potentially enhance estimation accuracy or reduce sample requirements. Moreover, while observational ergotropy provides a physically operational quantity, it would be valuable to extend similar estimation techniques to other thermodynamic primitives such as free energy differences, non-equilibrium work distributions, or coherence-based contributions to extractable work. These quantities are relevant in broader thermodynamic and resource-theoretic frameworks, and their estimation in black-box settings remains largely unexplored.

Another promising direction is to develop protocols that integrate ergotropy estimation into broader thermodynamic cycles, such as those governing the operation of quantum batteries, engines, or refrigerators. In such contexts, real-time estimation of work potential could serve as a feedback mechanism to control device operation, thereby enhancing energy efficiency.

\medskip
\noindent
\subsection*{Acknowledgments}
We thank Karen V. Hovhannisyan, Rick P. A. Simon, Janet Anders, Hamed Mohammady, and Riccardo Castellano for helpful discussions and comments. The authors acknowledge support from the Ministry of Electronics and Information Technology (MeitY), Government of India, under Grant No. 4(3)/2024-ITEA. AG and SD acknowledge support from the Science and Engineering Research Board, Department of Science and Technology (SERB-DST), Government of India, under Grant No.~SRG/2023/000217. SC and SH acknowledge funding from the Science and Engineering Research Board, Department of Science and Technology (SERB-DST), Government of India under Grant No.~SRG/2022/000354.~SC also acknowledges funding from Fujitsu Ltd, Japan. SC, SD, and US thank IIIT Hyderabad for support via the Faculty Seed Grant.

\subsection*{Data availability}
The python codes used in this study are publicly available in the GitHub repository \cite{git}.

\onecolumngrid
\newpage
\appendix
\renewcommand{\thelemma}{A\arabic{lemma}}
\section*{Appendix}
\label{sec:supp-mat}

In the Appendix, we provide additional examples that (a) illustrate the sample complexity of our work extraction algorithm (see Appendix~\ref{append-examples1} and Appendix~\ref{append-examples2}), and (b) shows the tightness of the bound we obtained for the robustness of observational ergotropy (Appendix~\ref{append:ex-rob}). We also derive a non-trivial bound on the deviation between true ergotropy and observational ergotropy in Appendix~\ref{append-deviation-erg-ergobs}. Finally, in Appendix~\ref{append-algo-single ancilla}, an ancilla-assisted work extraction protocol is also discussed, which, although interesting, results in a worse sample complexity than our protocol in the main text.

\section{Example of a \texorpdfstring{$3$}{}-qubit system }
\label{append-examples1}
Let us consider a $3$-qubit system with the Hamiltonian
\begin{align}
\label{eq:toy-ham}
H^{(8)}=H \otimes \mathbb I_2 \otimes \mathbb I_2  + \mathbb I_2 \otimes H \otimes \mathbb I_2 + \mathbb I_2 \otimes\mathbb I_2 \otimes H,
\end{align}
where $H = \frac{1}{2} (\mathbb I_2 - Z)$. Here $n=3$ and $N=8$. The spectral decomposition of this Hamiltonian is given by $H^{(8)}=\sum_{a=1}^4 E_a \sum_{\kappa=1}^{g(E_a)}\ket{E_a;\kappa}$, where 
   \begin{align*}
       &E_1=0, g(E_1)=1, \ket{E_1;1}=\ket{000}:=\ket{\phi_1};\\
       &E_2=1, g(E_2)=3, \ket{E_2;1}=\ket{001}:=\ket{\phi_2}, \ket{E_2;2}=\ket{010}:=\ket{\phi_3}, \ket{E_2;3}=\ket{100}:=\ket{\phi_4};\\
       &E_3=2, g(E_3)=3, \ket{E_3;1}=\ket{011}:=\ket{\phi_5}, \ket{E_3;2}=\ket{101}:=\ket{\phi_6}, \ket{E_3;3}=\ket{110}:=\ket{\phi_7};\\
       &E_4=3, g(E_4)=1, \ket{E_4;1}=\ket{111}:=\ket{\phi_8}.
   \end{align*}
Further let $Q_a=\sum_{\kappa=1}^{g(E_a)}\ket{E_a;\kappa}\bra{E_a;\kappa}$ with $a=1,\cdots,4$. Consider the unknown state to be  $\rho=(1/3)\sum_{i=1}^3 \rho_i$, where 
    \begin{align}
    \label{eq:first-ex-rho}
       &\rho_1:=\frac{1}{3}\sum_{i,j=1}^3\ket{\phi_i}\bra{\phi_j};~\rho_2:=\frac{1}{3}\sum_{i,j=4}^6(-1)^{i+j}\ket{\phi_i}\bra{\phi_j};~\rho_3:=\frac{1}{2}\sum_{i=7}^8\ket{\phi_i}\bra{\phi_i}.
   \end{align}
   The energy $\Tr\left[H^{(8)}\rho\right]$ of state $\rho$ is given by $29/18$.
   A minimum energy state (such a state is not unique here) with respect to $\rho$ is given by
   \begin{align}
       \rho_p=\frac{1}{6}\left(2\sum_{i=1}^2\ket{\phi_i}\bra{\phi_i} +\sum_{i=3}^4 \ket{\phi_i}\bra{\phi_i} \right).
   \end{align}
   The energy $\Tr\left[H^{(8)}\rho_p\right]$ of state $\rho_p$ is given by $2/3$. The ergotropy of $\rho$ is given by
   \begin{align*}
    \erg{\rho}&=\Tr\left[H^{(8)}\left( \rho - \rho_p\right)\right]=\frac{29-12}{18}\approx 0.94.
\end{align*}

\subsection{Measurement in entangled basis}
\label{app:case1}
Let the projective measurement be given by $\{P_i\}_{i=1}^4$, where $P_i=\sum_{b=1}^2 \ket{\psi_i^{(b)}}\bra{\psi_i^{(b)}}$ for $i\in[4]$, where
\begin{align}
    &\ket{\psi_1^{(1)}}:=\frac{1}{\sqrt{2}}\left(\ket{\phi_1} + \ket{\phi_7}\right);
    \ket{\psi_1^{(2)}}:=\frac{1}{\sqrt{2}}\left(\ket{\phi_2} + \ket{\phi_8}\right);\\
    &\ket{\psi_2^{(1)}}:=\frac{1}{\sqrt{2}}\left(\ket{\phi_1} - \ket{\phi_7}\right);
    \ket{\psi_2^{(2)}}:=\frac{1}{\sqrt{2}}\left(\ket{\phi_2} - \ket{\phi_8}\right);\\
    &\ket{\psi_3^{(1)}}:=\frac{1}{\sqrt{2}}\left(\ket{\phi_3} + \ket{\phi_4}\right);
    \ket{\psi_3^{(2)}}:=\frac{1}{\sqrt{2}}\left(\ket{\phi_5} + \ket{\phi_6}\right);\\
    &\ket{\psi_4^{(1)}}:=\frac{1}{\sqrt{2}}\left(\ket{\phi_3} - \ket{\phi_4}\right);
    \ket{\psi_4^{(2)}}:=\frac{1}{\sqrt{2}}\left(\ket{\phi_5} - \ket{\phi_6}\right);
\end{align}
Indeed all $P_i$'s are two dimensional projectors and in fact $P_1=\ket{\Phi_+}\bra{\Phi_+}\otimes \mathbb{I}_2$, $P_2=\ket{\Phi_-}\bra{\Phi_-}\otimes \mathbb{I}_2$, $P_3=\ket{\Psi_+}\bra{\Psi_+}\otimes \mathbb{I}_2$, and $P_4=\ket{\Psi_-}\bra{\Psi_-}\otimes \mathbb{I}_2$, where $\ket{\Phi_+}=\frac{1}{\sqrt{2}}\left(\ket{00}+\ket{11}\right)$, $\ket{\Phi_-}=\frac{1}{\sqrt{2}}\left(\ket{00}-\ket{11}\right)$, $\ket{\Psi_+} = \frac{1}{\sqrt{2}}\left(\ket{01}+\ket{10}\right)$, and $\ket{\Psi_-} = \frac{1}{\sqrt{2}}\left(\ket{01}-\ket{10}\right)$.

\begin{figure}
    \centering
    \includegraphics[width=\linewidth]{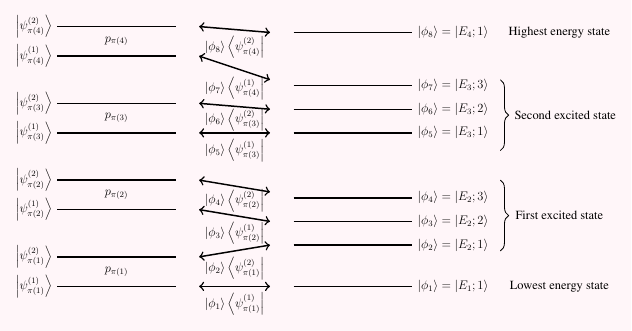}
    \caption{The schematic above shows the action of global unitary on measured state to yield a corresponding passive state for $n=3~ (N=8)$ and Hamiltonian $H^{(8)}$ given by Eq. \eqref{eq:toy-ham}. Let the projective measurement be given by $\{P_i\}_{i=1}^4$, where $P_i=\sum_{b=1}^2 \ket{\psi_i^{(b)}}\bra{\psi_i^{(b)}}$ and $i=1,\cdots,4$. Let $\rho$ be the unknown state and $p_i=\Tr\left[P_i\rho\right]$. Let $\pi$ be the permutation on $4$ objects such that $p_{\pi(i)}\geq p_{\pi(i+1)}$ for all $i\in[3]$. Then we define $\ug= \ket{\phi_1}\bra{\psi_{\pi(1)}^{(1)}} + \ket{\phi_2}\bra{\psi_{\pi(1)}^{(2)}} + \ket{\phi_3}\bra{\psi_{\pi(2)}^{(1)}}+\ket{\phi_4}\bra{\psi_{\pi(2)}^{(2)}} + \ket{\phi_5}\bra{\psi_{\pi(3)}^{(1)}}+ \ket{\phi_6}\bra{\psi_{\pi(3)}^{(2)}} + \ket{\phi_7}\bra{\psi_{\pi(4)}^{(1)}} + \ket{\phi_8}\bra{\psi_{\pi(4)}^{(2)}}$.}
    \label{fig:global-unitary}
\end{figure}

Let us now compute the probabilities $\Tr[P_i\rho]$. We have $\Tr[P_1\rho] =\frac{5}{18}$, $\Tr[P_2\rho] = \frac{5}{18}$, $\Tr[P_3\rho] =\frac{1}{9}$, and $\Tr[P_4\rho] = \frac{1}{3}$. The average state after the measurement is given by $\widetilde{\rho}=\sum_{i=1}^4P_i\rho P_i$. The energy $\Tr\left[H^{(8)} \wt{\rho}\right]$ is given by $3/2$. The eigenvalues of $\widetilde{\rho}$ are given by $\frac{1}{18} \left(\sqrt{5}+3\right),\frac{7}{36},\frac{7}{36},\frac{1}{9},\frac{1}{12},\frac{1}{12},\frac{1}{18} \left(3-\sqrt{5}\right),0$. The minimum energy state corresponding to this state is given by
\begin{align}
   \widetilde{\rho}_p&=\frac{1}{18} \left(\sqrt{5}+3\right) \ket{\phi_1}\bra{\phi_1}+ \frac{7}{36} \sum_{i=2}^3\ket{\phi_i}\bra{\phi_i} + \frac{1}{9} \ket{\phi_4}\bra{\phi_4} + \frac{1}{12} \sum_{i=5}^6\ket{\phi_i}\bra{\phi_i}+ \frac{1}{18} \left(3-\sqrt{5}\right) \ket{\phi_7}\bra{\phi_7} .
\end{align}
The ergotropy of the state $\widetilde{\rho}_p$ is given by
\begin{align*}
    \erg{\widetilde{\rho}_p}&=\Tr\left[H^{(8)}\left( \wt{\rho} - \wt{\rho}_p\right)\right]=\frac{3}{2}-\frac{1}{18} \left(21-2\sqrt{5}\right)\approx 0.58.
\end{align*}

After averaging $\wt{\rho}$ with respect to Haar measure, we get $\rho'=\frac{1}{36}\left(5 P_1+ 5P_2+2 P_3+6P_4\right)$. The energy of the state $\rho'$ is given by $3/2$. The eigenvalues (in decreasing  order) $\{\lambda_i\}_{i=1}^8$ and the corresponding set of eigenvectors $\{\ket{\lambda_i}\}_{i=1}^8$ of $\rho'$ are given by
\begin{align*}
  &\lambda_1 =  \frac{1}{6},~\ket{\lambda_1}=\ket{\psi_4^{(1)}};~\lambda_2 =  \frac{1}{6},~\ket{\lambda_2}=\ket{\psi_4^{(2)}};~\lambda_3 =  \frac{5}{36},~\ket{\lambda_3}=\ket{\psi_1^{(1)}};~\lambda_4 =  \frac{5}{36},~\ket{\lambda_4}=\ket{\psi_1^{(2)}};\\
  &\lambda_5 =  \frac{5}{36},~\ket{\lambda_5}=\ket{\psi_2^{(1)}},~\lambda_6 =  \frac{5}{36},~\ket{\lambda_6}=\ket{\psi_2^{(2)}};~\lambda_7 =  \frac{1}{18},~\ket{\lambda_7}=\ket{\psi_3^{(1)}};~\lambda_8 =  \frac{1}{18},~\ket{\lambda_8}=\ket{\psi_3^{(2)}}.
\end{align*}
Let us define permutation $\pi$ as $(1~ 4 ~3 ~2)$, i.e., $\pi(1)=4,~\pi(4)=3,~\pi(3)=2, ~\pi(2)=1$.  Using this, let us consider a global unitary (see Fig.~\ref{fig:global-unitary})
\begin{align*}
    \ug&=\ket{\phi_1}\bra{\psi_{\pi(1)}^{(1)}} + \ket{\phi_2}\bra{\psi_{\pi(1)}^{(2)}} + \ket{\phi_3}\bra{\psi_{\pi(2)}^{(1)}} + \ket{\phi_4}\bra{\psi_{\pi(2)}^{(2)}} + \ket{\phi_5}\bra{\psi_{\pi(3)}^{(1)}}+\ket{\phi_6}\bra{\psi_{\pi(3)}^{(i)}}+ \ket{\phi_7}\bra{\psi_{\pi(4)}^{(1)}}+\ket{\phi_8}\bra{\psi_{\pi(4)}^{(2)}}\\
    &=\ket{\phi_1}\bra{\psi_{4}^{(1)}} + \ket{\phi_2}\bra{\psi_{4}^{(2)}} + \ket{\phi_3}\bra{\psi_{1}^{(1)}} + \ket{\phi_4}\bra{\psi_{1}^{(2)}} + \ket{\phi_5}\bra{\psi_{2}^{(1)}}+\ket{\phi_6}\bra{\psi_{2}^{(i)}}+ \ket{\phi_7}\bra{\psi_{3}^{(1)}}+\ket{\phi_8}\bra{\psi_{3}^{(2)}}.
\end{align*}
 The minimum energy state for $\rho'=\ug\rho'\ugdag$ is then given by
\begin{align}
   \rho'_p&=\frac{1}{36}\left[6\sum_{i=1}^2 \ket{\phi_i}\bra{\phi_i} + 5 \sum_{i=3}^6 \ket{\phi_i}\bra{\phi_i} +2 \sum_{i=7}^8 \ket{\phi_i}\bra{\phi_i}\right].
\end{align}
The ergotropy of the state $\rho'$ is given by $\erg{\rho'}=\Tr\left[H^{(8)}\left( \rho' - \rho'_p\right)\right]=\frac{2}{9}\approx 0.22$.
while the observational ergotropy of the state $\rho$ is given by $\ergobs{\rho}=\Tr\left[H^{(8)}\left( \rho - \rho'_p\right)\right]=\frac{1}{3}\approx 0.33$ (See Table \ref{tab:ex1} and Fig.~\ref{fg.samp2}).

\begin{table}[]
\centering
\begin{tabular}{|c|c|c|c|c|}
\hline
\multirow{2}{*}{State} & \multirow{2}{*}{Energy} & \multirow{2}{*}{Ergotropy} &  \multicolumn{2}{c|}{Observational Ergotropy}\\
 \cline{4-5} &  &  & Entangled basis (Sec.~\ref{app:case1}) & Product basis (Sec.~\ref{app:case2}) \\
\hline
 $\rho$ & $\approx 1.61$ & $\approx 0.94$ & $\approx 0.33$ & $\approx 0.22$ \\
\hline
\end{tabular}
\caption{Ergotropy and observational ergotropy. For state $\rho=(1/3)\sum_{i=1}^3\rho_i$, where $\rho_1$, $\rho_2$, and $\rho_3$ are given in Eq. \eqref{eq:first-ex-rho}, the table shows values of ergotropy and observational ergotropy for two different measurement settings.}
\label{tab:ex1}
\end{table}

\begin{figure}[ht!]
\centering
\subfloat[Measurement in entangled basis]{
{\includegraphics[width=0.45\textwidth]{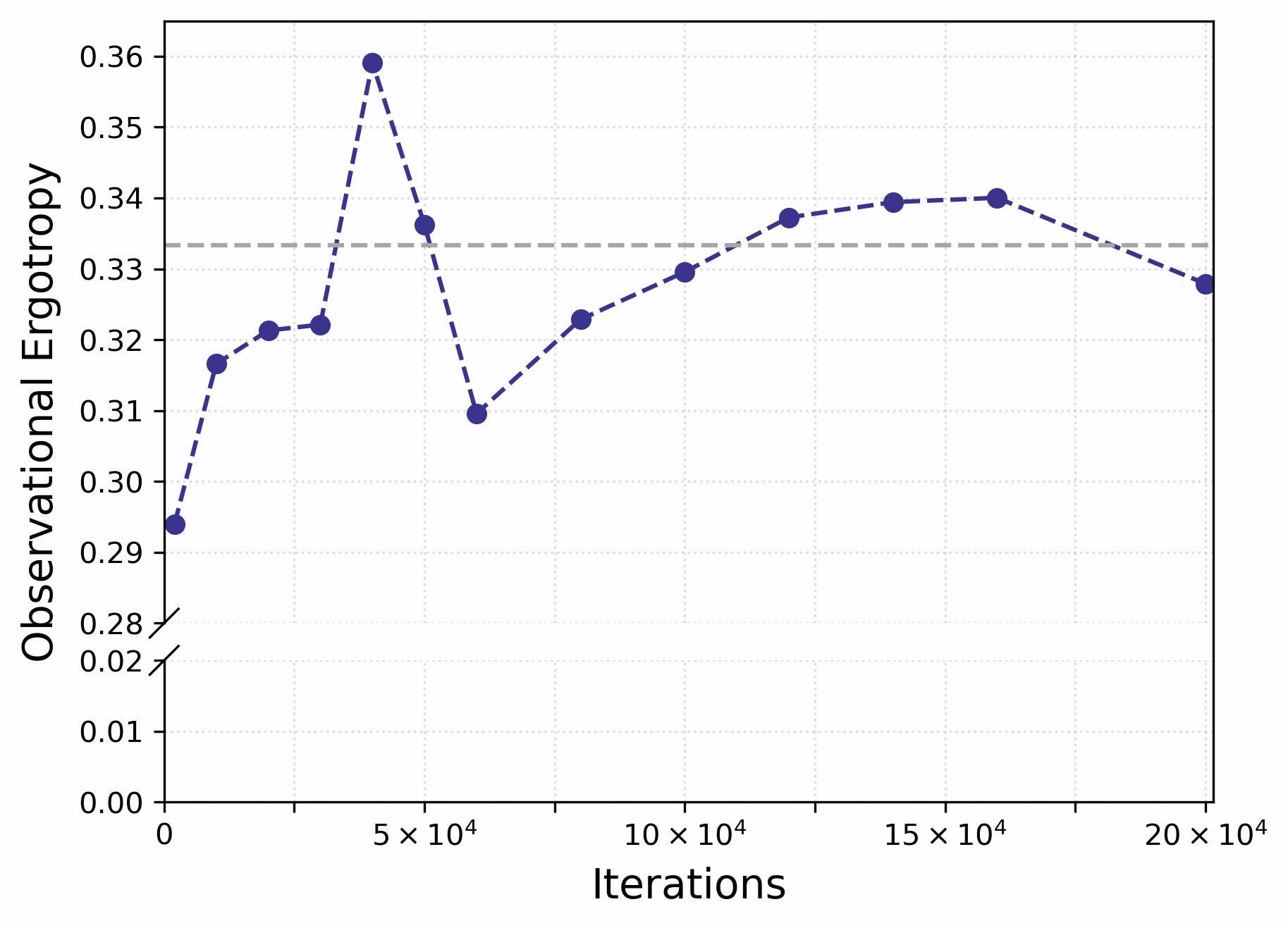}\label{fg.samp2}}
}
\subfloat[Measurement in product basis]{
{\includegraphics[width=0.45\textwidth]{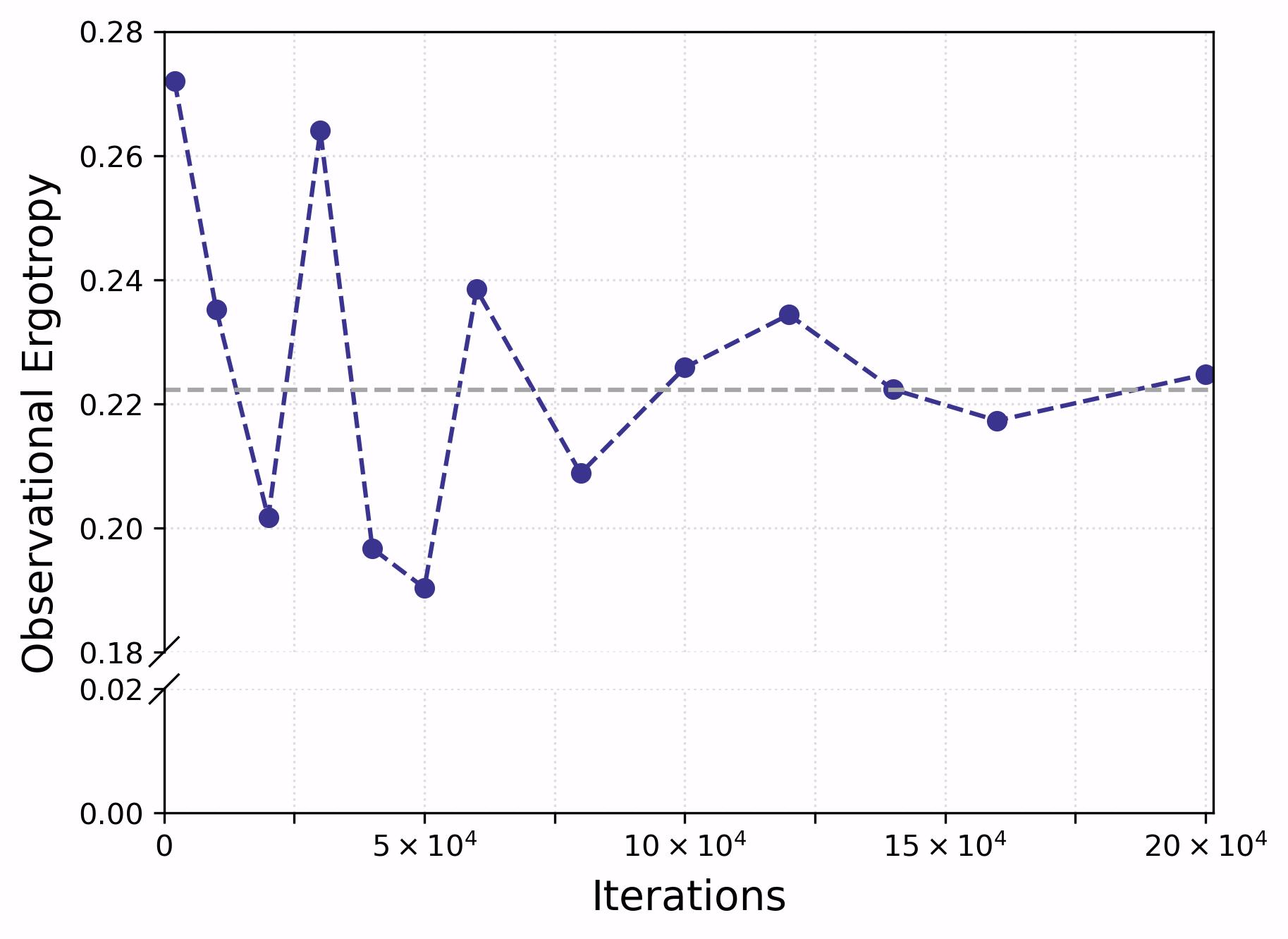}\label{fg.samp1}}
}
\caption{Observational ergotropy and sample complexity. Consider a state $\rho=(1/3)\sum_{i=1}^3\rho_i$, where $\rho_1$, $\rho_2$, and $\rho_3$ are given in Eq. \eqref{eq:first-ex-rho}. Figs. (\ref{fg.samp2}) and (\ref{fg.samp1}) show the convergence of our protocol from main text to the exact value of observational ergotropy, as we increase number of samples, with respect to measurements in entangled (Sec.~\ref{app:case1}) and product (Sec.~\ref{app:case2}) bases, respectively.}
\label{fg.samp}
\end{figure}

\subsection{Measurement in product basis}
\label{app:case2}
Let the projective measurement be given by $\{K_i\}_{i=1}^4$, where $K_1=\ket{00}\bra{00}\otimes \mathbb{I}_2$, $K_2=\ket{01}\bra{01}\otimes \mathbb{I}_2$, $K_3=\ket{10}\bra{10}\otimes \mathbb{I}_2$, and $K_4= \ket{11}\bra{11}\otimes \mathbb{I}_2$.
%
We have $\Tr[K_1\rho] = 2/9$, $\Tr[K_2\rho] = 2/9$, $\Tr[K_3\rho] = 2/9$, and $\Tr[K_4\rho] = 1/3$. The average state after the measurement is given by $\widetilde{\sigma}=\sum_{i=1}^4K_i\rho K_i$. The energy $\Tr\left[H^{(8)} \wt{\sigma}\right]$ is given by $29/18$. The eigenvalues of $\widetilde{\sigma}$ are given by $\frac{2}{9},\frac{2}{9},\frac{1}{6},\frac{1}{6},\frac{1}{9},\frac{1}{9},0,0$. The minimum energy state corresponding to this state is given by
\begin{align}
   \widetilde{\sigma}_p&=\frac{2}{9}\sum_{i=1}^2 \ket{\phi_i}\bra{\phi_i}+ \frac{1}{6}\sum_{i=3}^4\ket{\phi_i}\bra{\phi_i} + \frac{1}{9} \sum_{i=5}^6\ket{\phi_i}\bra{\phi_i}.
\end{align}
The ergotropy of the state $\wt{\sigma}$ is given by $\erg{\wt{\sigma}}=\Tr\left[H^{(8)}\left( \wt{\sigma} - \wt{\sigma}_p\right)\right]=\frac{29}{18}-1\approx 0.61$. After averaging $\wt{\sigma}$ wrt to Haar measure, we get
\begin{align}
  \sigma'&=\frac{1}{18}\left(2K_1+ 2K_2+2K_3+3K_4\right).
\end{align}

The energy of the state $\sigma'$ is given by $29/18$. The minimum energy state for $\sigma'$ is given by
\begin{align}
   \sigma'_p&=\frac{1}{6}\sum_{i=1}^2 \ket{\phi_i}\bra{\phi_i} +\frac{1}{9}  \sum_{i=3}^8 \ket{\phi_i}\bra{\phi_i} 
\end{align}
The energy of the state $\sigma'_p$ is given by $\Tr\left[H^{(8)} \sigma'_p\right]= \frac{25}{18}$. The ergotropy of the state $\sigma'$ is given by $\erg{\sigma'}=\Tr\left[H^{(8)}\left( \sigma' - \sigma'_p\right)\right] =\frac{4}{18}\approx 0.22$. The observational ergotropy of the state $\rho$ is given by $\ergobs{\rho}=\Tr\left[H^{(8)}\left( \rho - \sigma'_p\right)\right]=\frac{2}{9}\approx 0.22$ (See Table \ref{tab:ex1} and Fig.~\ref{fg.samp1}).

\section{Example of \texorpdfstring{$4$}{} qubit translationally invariant spin chain}
\label{append-examples2}
In this section we consider translationally invariant spin chain, in particular, one dimensional Heisenberg XXX model, as an example. For $n$ spins such a spin chain is described by the Hamiltonian
\begin{align}
    H_{n} = -J \sum_{j=1}^n\sigma_z^{(j)} \otimes \sigma_z^{(j+1)} - h \sum_{j=1}^n\sigma_x^{(j)},
\end{align}
where $J$ is the spin coupling constant, $h$ is the transverse magnetic field and the periodic boundary condition is ensured by identification $\sigma_z^{(n+1)} = \sigma_z^{(1)}$.
\begin{figure}[ht!]
\centering
\includegraphics[width=0.45\textwidth]{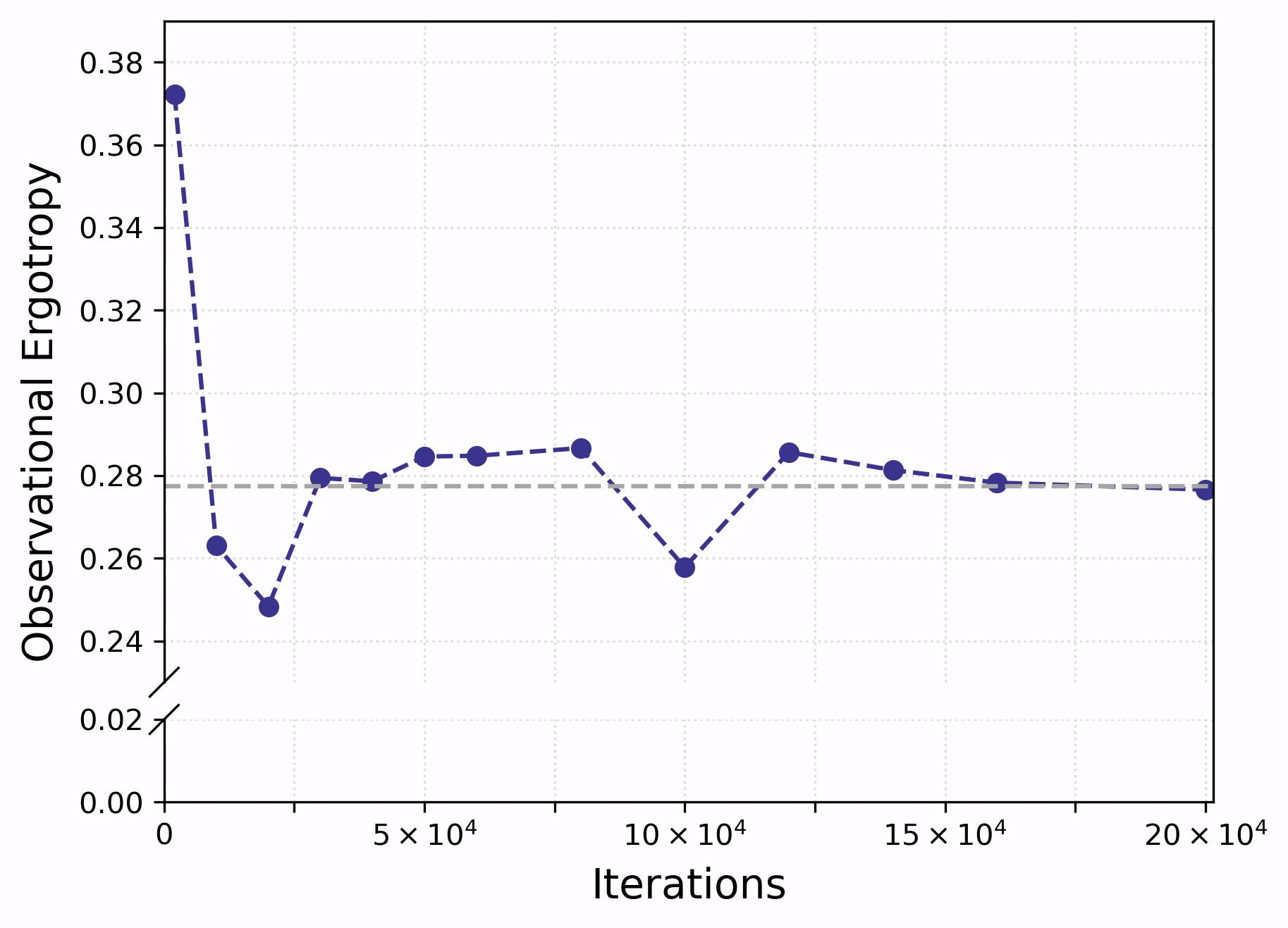}
\caption{$n=4$ Heisenberg XXX model. For the state in Eq. \eqref{appeq:4-state}, the exact observational ergotropy of $\eta_4$ is equal to $\approx 0.28$. The figure shows how our protocol for work extraction reaches to $\approx 0.28$ as we increase the number of samples in accordance with the bound on the sample complexity.}
\label{fg.samp2-spin}
\end{figure}
The sign of $J$ ensures whether the ground state is ferromagnetic or antiferromagnetic \cite{Anderson1959, Auerbach1994, Kittel2004}. Here we take $J=1$ and $h=1$. For the case of $4$ spins, we consider the state 
\begin{align}
\label{appeq:4-state}
  \eta_4=\frac{1}{4}\ket{\phi_1}\bra{\phi_1} + \frac{1}{4}\ket{\phi_2}\bra{\phi_2} + \frac{1}{2}\ket{\phi_3}\bra{\phi_3},
\end{align}
where $\ket{\phi_1}=\ket{0010}$, $\ket{\phi_2}=\ket{0111}$ and 
\begin{align}
    \ket{\phi_3} = \frac{1}{2}\left(\ket{0000} + \ket{0001} + \ket{1110}+ \ket{1111}\right).
\end{align}
The ergotropy of $\eta_4$ is computed to be  $\approx 2.86$. Further, to obtain observational ergotropy, we perform the projective measurement defined by the projectors $L_i$ for $i=[4]$, where 
 \begin{align}
       L_1=\ket{00}\bra{00}\otimes \mathbb{I}_4;~L_2=\ket{01}\bra{01}\otimes \mathbb{I}_4;~
       L_3=\ket{10}\bra{10}\otimes \mathbb{I}_4;~
       L_4=\ket{11}\bra{11}\otimes \mathbb{I}_4.\label{eq:proj4-4}
   \end{align}
With the above projectors, the observational ergotropy of $\eta_4$ is given by $\approx 0.28$. In Fig.~\ref{fg.samp2-spin}, we show how our protocol reaches this value as we increase the number of samples.

\section{Deviation between true ergotropy and observational ergotropy}
\label{append-deviation-erg-ergobs}
Estimating the ergotropy of an unknown quantum state directly is notoriously resource-intensive, since it requires complete spectral information about the state. Our protocol in Sec.~\ref{subsec:work-multiple-copy} (Protocol \ref{prot1}), on the other hand, circumvents this by efficiently estimating the \emph{observational ergotropy}, which depends only on coarse-grained information obtained from a small number of samples. From a thermodynamic perspective, it is natural to expect that the true ergotropy of a state $\rho$, the maximum work extractable under optimal global control, should always exceed its observational counterpart, which is limited by the coarse description accessible in practice. Put differently, while ergotropy leverages the full microscopic structure of $\rho$, observational ergotropy reflects the work one can certify from only a few samples of the state. In this sub-section, we sharpen this intuition by establishing a non-trivial lower bound on the difference between the two quantities.

Recall that to estimate observational ergotropy, we introduced a collection of $M$ orthogonal projectors $\{P_i\}_{i=1}^{M}$, each of dimension $d=2^r$, satisfying $\sum_{i} P_i=\mathbb{I}_N$. These projectors define the probabilities $p_i=\Tr[P_i\rho]$, and specify the coarse-grained state $\wt{\rho}$. Let $E_1\leq E_2\leq \cdots E_{N}$ denote the eigenvalues of the system Hamiltonian $H$, listed in non-decreasing order, and counted with degeneracies. Then we have the following lemma.
\begin{lemma}
\label{lemma:lower-bound}
Let $N=2^n$, and $H$ be the total Hamiltonian of an $n$-qubit system in the state $\rho$, with eigenvalues $E_1\leq E_2\leq\cdots\leq E_N$. Define the set of orthogonal projectors $\{P_i\}_{i=1}^{M}$, each of dimension $d=2^r$, where $M=N/d$, satisfying $\sum_{i}P_i=\mathbb{I}_N$. Furthermore, let $p_i=\Tr[P_i\rho]$, and $p_{\max}=\max_i p_i$. Then,
\begin{align}
   \erg{\rho}-\ergobs{\rho} \ge \max\left\{0, \left(\frac{p_{\max}\cdot S}{d} - \frac{\Tr[H]}{N}\right)\right\},
\end{align}
where $S=\sum_{i=1}^{d}E_{i}$, is the sum of the $d$ lowest energy eigenvalues of $H$.
\end{lemma}

\begin{proof}
 We first show that 
 $$
 \erg{\rho}-\ergobs{\rho} \ge 0.
 $$ 
 Note that for non-negative weights $q_j\geq 0$, and real valued functions $f_j(x)$, if $\min_x \sum_{j} q_j f_j(x)= \sum_{j} q_j f_j(x^*)$, then since $f_j(x^*) \geq \min_x f_j(x)$, we have $$\min_x \sum_{j} q_j f_j(x)\geq  \sum_{j} q_j \min_xf_j(x).
 $$ 
 Applying the integral form of this inequality to the expression for observational ergotropy, we obtain
\begin{align}
\ergobs{\rho}
&= \Tr\left[H~\rho\right] - \min_{V} \int_{U}d\mu(U)\,\Tr\left[H~V (U\rho U^{\dag})V^{\dag}\right] \nonumber \\
&\leq \Tr\left[H~\rho\right] - \int_{U}d\mu(U)\,\min_{V}\Tr\left[H~V (U\rho U^{\dag})V^{\dag}\right] \nonumber \\
&= \Tr\left[H~\rho\right] - \min_{V}\Tr\left[H~V\rho V^{\dag}\right] = \erg{\rho}.
\end{align}
Thus, observational ergotropy of a state is at most the true ergotropy, i.e.\ $\erg{\rho}-\ergobs{\rho}\geq 0$.

For the second part, we prove that 
$$
 \erg{\rho}-\ergobs{\rho} \ge \dfrac{p_{\max}\cdot S}{d}-\dfrac{\Tr[H]}{N}.
$$
First observe that, without loss of generality, we can assume that the eigenvalues of $H$ are non-negative. This is because, for any $c>0$, we can always add $c\mathbb{I}_N$ to $H$,  without changing the difference between ergotropies. In other words, ergotropy differences are invariant under the shift $H\mapsto H+c\cdot \mathbb{I}_N$.

The coarse-grained state $\wt{\rho}$ is a block-diagonal state (see Eq.~\eqref{eq:randomized-state}) and has $M$ distinct eigenvalues $\{p_i/d\}_{i=1}^M$, each with degeneracy $d$. The maximum eigenvalue of $\wt{\rho}$ is given by $p_{\max}/d$, where $p_{\max}=\max_ip_i$. Let us reorder the block probabilities such that 
$$
p^{\downarrow}_1=p_{\max}\geq \cdots \geq p^{\downarrow}_M.
$$
Also, write the energy eigenvalues of $H$, denoted as $E_1\leq \cdots E_{2^n}$, in nondecreasing order. Define contagious block sums 
\begin{align}
    S_j=\sum_{i=(j-1)d+1}^{jd}E_i,  
\end{align}
for $j\in [1,M]$. Note that $S_1=S$. Then, the minimum over unitaries that permute the eigenbasis of $\wt{\rho}$ is obtained by pairing the largest eigenvalues of $\wt{\rho}$ with the smallest energy eigenvalues. Hence, 
\begin{align*}
    \min_V\Tr\left[H\, V\wt{\rho}V^\dag\right]&=\frac{1}{d}\sum_{j=1}^M p_{j}^{\downarrow} S_j\\
    &\geq \frac{p_{\max}S}{d},
\end{align*} 
where the lower bound in the last line follows from the fact that each term in the sum is non-negative. On the other hand, for any state $\rho$,
$$\min_V \Tr[H V \rho V^\dagger] \leq \mathbb{E}_V \Tr[H V \rho V^\dagger]=\Tr[H]/2^n.
$$  
Combining these, we conclude the proof of the lemma.
\end{proof}
This bound shows that the gap between the ergotropy and the observational ergotropy is controlled by two competing contributions: the probability weight $p_{\max}$ of the most likely projective measurement outcome, and the average energy contained in the lowest-energy subspace of dimension $d=2^r$. Whenever $p_{\max}$ is sufficiently larger than the uniform outcome $1/N$, the state $\rho$ necessarily places more weight in the low-energy sector than the coarse-grained description suggests, ensuring a strict positive lower bound on the gap. We emphasize, however, that Lemma~\ref{lemma:lower-bound} does not guarantee absolute positivity in all cases; when the block probabilities are too evenly distributed, the bound may vanish. Nonetheless, whenever it is positive, the lemma provides a concrete quantitative witness for the excess ergotropy accessible through optimized global unitaries.

\section{Estimating observational ergotropy using Linear Combination of Unitaries}
\label{append-algo-single ancilla}
In this section, we discuss a different approach for estimating observational ergotropy. It is, in principle, possible to use a different sampling strategy to implement a direct sum of Haar random matrices in Protocol \ref{prot1}. For this, we express this quantity (which is itself a unitary) as a Linear Combination of Unitaries (LCU) \cite{childs2012hamiltonian}. Observe that $V=\oplus_{i=1}^{M}\wt{V}_i$ can be expressed as as a linear combination of $2M$ unitaries as follows:
\begin{align*}
V=\oplus_{i=1}^{M} \wt{V}_i &= \sum_{i=1}^M \ketbra{i}{i}\otimes \wt{V}_i = \sum_{i=1}^{M} \frac{\mathbb{I}_M+\wt{I}_i}{2}\otimes \wt{V}_i= \sum_{j=1}^{2M} c_j W_j,
\end{align*}
where $\wt{I}_i=2\ketbra{i}{i} - \mathbb{I}_M$ is a  diagonal unitary matrix and $c_j=1/2$ for all $j\in[2M]$. The matrices $W_j$ above are unitaries and are defined as
\begin{align}
\label{eq:w-with-2M}
    W_j = \begin{cases}
       \mathbb{I}_M \otimes \wt{V}_j~~\mathrm{if}~1\leq j\leq M \vspace{0.2cm}\\
       \wt{I}_{j-M}\otimes \wt{V}_{j-M}~~\mathrm{if}~M < j\leq 2M.
    \end{cases}
\end{align}
Even though this framework has been central to the development of a number of near-optimal quantum algorithms \cite{childs2012hamiltonian, berry2015simulating, childs2017quantum, chakraborty2019power, vanApeldoorn2020quantumsdpsolvers}, it is quite resource-demanding as it requires several ancilla qubits and multi-qubit controlled operations to be implemented. However, recently, a randomized quantum algorithm for LCU was introduced \cite{chakraborty2019power}, which requires circuits of shorter depth and only a single ancilla qubit. So, instead of randomly sampling an element from the $r$-qubit Pauli group $\mathcal{P}(r)$ as described in Protocol \ref{prot1}, we can implement $V$ by using the Single-Ancilla LCU algorithm of Ref.~\cite{chakraborty2019power}. However, we prove that this leads to a substantial increase in the sample complexity.

Let us first consider a unitary $1$-design $Q:=\left\{R_1,\cdots, R_L\right\}$ of cardinality $L$ on the spaces $\mathcal{H}_i$ whose dimensions are the same for all $i\in[M]$. Here $R_i$, for all $i\in[L]$, are unitaries acting on $\mathcal{H}_j$ for all $j\in[M]$. Further, let us define a set $\mathsf{S}$ of cardinality $L^M$ as follows.
\begin{align}
\label{eq:direct-sum-1-design}
    \mathsf{S}:=\left\{\oplus_{i=1}^M\wt{V_i}~\vert~\wt{V}_i\sim Q,~\forall~i\in [M]~\right\}.
\end{align}
In the following lemma we prove that $\mathsf{S}$ is itself a unitary $1$-design on  $\oplus_{i=1}^M\mathcal{H}_i$.

\begin{lemma} The set $\mathsf{S}$, Eq. \eqref{eq:direct-sum-1-design}, of direct sums of unitaries $\wt{V}_i$ on Hilbert spaces  $\mathcal{H}_i$ for $i\in[M]$ sampled from a unitary $1$-design constitute a unitary $1$-design on $\oplus_{i=1}^M\mathcal{H}_i$ with respect to the product measure.
\end{lemma}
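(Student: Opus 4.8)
The plan is to verify the defining first-moment property of a unitary $1$-design directly: uniformly averaging the twirl $X\mapsto VXV^{\dag}$ over $V\in\mathsf{S}$ should reproduce the channel obtained by averaging over the product Haar measure $\prod_{i=1}^{M}\dif\mu_{\mathrm{Haar}}(\wt{V}_i)$ on the block-diagonal subgroup $\oplus_{i=1}^{M}\mathsf{U}(2^r)$. Writing $V=\oplus_{i=1}^{M}\wt{V}_i$ with each $\wt{V}_i$ drawn independently and uniformly from $Q$, the goal is to establish, for every operator $X$ on $\oplus_{i=1}^{M}\mc{H}_i$,
\begin{align*}
\frac{1}{L^{M}}\sum_{V\in\mathsf{S}} V X V^{\dag}=\int\left(\oplus_{i=1}^{M}\wt{V}_i\right)X\left(\oplus_{i=1}^{M}\wt{V}_i^{\dag}\right)\prod_{i=1}^{M}\dif\mu_{\mathrm{Haar}}(\wt{V}_i).
\end{align*}

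First I would exploit the block structure. Decomposing $X$ into blocks $X_{ij}\colon\mc{H}_j\to\mc{H}_i$, a one-line computation shows that the $(i,j)$ block of $VXV^{\dag}$ is $\wt{V}_i X_{ij}\wt{V}_j^{\dag}$, so the twirl decouples block by block and the problem reduces to matching each block average separately. For the diagonal blocks ($i=j$), independence and uniform sampling give $\frac{1}{L}\sum_{\ell=1}^{L}R_{\ell}X_{ii}R_{\ell}^{\dag}$, which by the assumed $1$-design property of $Q$ equals $\int R X_{ii}R^{\dag}\,\dif\mu_{\mathrm{Haar}}(R)=(\Tr[X_{ii}]/d)\,\mathbb{I}$ with $d=2^{r}$; this is exactly the corresponding diagonal block of the product-Haar average, so the diagonal part matches immediately.

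The off-diagonal blocks ($i\neq j$) are where the main obstacle lies. Here independence lets the $\mathsf{S}$-average factorize as $\left(\frac{1}{L}\sum_{\ell}R_{\ell}\right)X_{ij}\left(\frac{1}{L}\sum_{m}R_{m}^{\dag}\right)$, whereas the product-Haar average factorizes as $\mathbb{E}_{\mathrm{Haar}}[\wt{V}_i]\,X_{ij}\,\mathbb{E}_{\mathrm{Haar}}[\wt{V}_j^{\dag}]=0$, since the Haar first moment $\int R\,\dif\mu_{\mathrm{Haar}}(R)$ vanishes. Matching the two therefore forces $\frac{1}{L}\sum_{\ell}R_{\ell}=0$, a condition that the $1$-design property alone does \emph{not} imply (e.g.\ the phaseless single-qubit Paulis $\{\mathbb{I},X,Y,Z\}$ form a $1$-design with nonzero mean). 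I expect this to be the crux of the argument.

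I would resolve it in two complementary ways. For the full statement one simply observes that the design actually used, the $r$-qubit Pauli group, contains $P$ and $-P$ for every element and so has $\frac{1}{L}\sum_{\ell}R_{\ell}=0$ by pairwise cancellation, killing the off-diagonal terms and closing the proof. More to the point for the application, the operator that Protocol \ref{prot1} ever feeds into this twirl is the coarse-grained state $\bar{\rho}=\oplus_{i=1}^{M}\rho_{ii}$, which is block diagonal; restricted to such inputs all $X_{ij}$ with $i\neq j$ vanish, the off-diagonal obstruction never arises, and the matching follows purely from the single-block $1$-design identity together with independence of the blocks. Everything outside the off-diagonal analysis is routine.
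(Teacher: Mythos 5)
Your proof follows the same skeleton as the paper's: decompose the operator into blocks $P_i X P_j$, observe that the twirl over $\mathsf{S}$ acts block-wise as $X_{ij}\mapsto \wt{V}_i X_{ij}\wt{V}_j^{\dag}$, handle the diagonal blocks via the single-block $1$-design identity (identical to the paper's Eq.~\eqref{eq:diagonal-block}), and treat the off-diagonal blocks separately. The difference — and it is a substantive one — is in the off-diagonal step. The paper's Eq.~\eqref{eq:nondiagonal-block} asserts $\frac{1}{L^2}\sum_{l,l'}R_l P_i\rho P_j R_{l'}^{\dag}=\int \wt{V}_i P_i\rho P_j \wt{V}_j^{\dag}\,\mathrm{d}\mu(\wt{V}_i)\,\mathrm{d}\mu(\wt{V}_j)$ without comment; both sides factor through the first moments $\frac{1}{L}\sum_{l}R_l$ and $\int R\,\mathrm{d}\mu(R)$ respectively, the Haar moment vanishes, but the design moment need not — exactly as you point out with the phaseless Paulis $\{\mathbb{I},X,Y,Z\}$, which form a (twirl-sense) $1$-design with nonzero mean and for which the lemma as stated actually fails on operators with nonzero off-diagonal blocks. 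So the paper's proof tacitly uses an extra property of the chosen design, and your analysis correctly isolates it as the crux. Both of your repairs are valid and either suffices for the application: the $r$-qubit Pauli group with $L=4^{r+1}$ elements includes all four phases, so $\sum_{\ell}R_{\ell}=0$ and the off-diagonal terms vanish on both sides; and independently, Protocol~\ref{prot1} only ever twirls the block-diagonal state $\bar{\rho}=\oplus_i\rho_{ii}$, for which the off-diagonal blocks are absent from the outset. What your version buys is a correct statement of the hypothesis actually needed (a $1$-design with vanishing first moment, or restriction to block-diagonal inputs); what the paper's version buys is brevity, at the cost of a claim that is false at the stated level of generality.
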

\begin{proof}
Note that averaging of an unknown state $\rho$ with respect to $\oplus_{i=1}^M\wt{V_i}=V$ sampled from the Haar measure $\mathrm{d}\mu\left(V\right)=\prod_{i=1}^M\mathrm{d}\mu\left( \wt{V}_i\right)$ produces 
\begin{align}
    \int_V V\rho V^{\dag} ~\prod_{i=1}^m \mathrm{d}\mu\left( \wt{V}_i\right) = \sum_i p_i \frac{P_i}{2^r},
\end{align}
where $p_i=\Tr\left[\rho P_i\right]$ and $\Tr\left[P_i\right]=2^r$ is the dimension of the each block on which $P_i$ is the projector. Now consider $\forall i \in \left[M \right]$, $\wt{V}_i$ is sampled from a unitary  $1$-design $Q =\left\{R_1,\cdots,R_L\right\}$ of cardinality $L$. Further, let the elements of $S$ be given by $V_i$ with $i=1,\cdots, L^M$. Each $V_i$ is of the form $\oplus_{i=1}^M\wt{V_i}$. To prove the lemma, it is sufficient to prove that 
\begin{align}
\label{eq:need-for-1-design}
 \int_V V \rho V^{\dagger} \prod_{i=1}^M\mathrm{d}\mu\left(\wt{V}_i\right)&=\frac{1}{L^M}\sum_{k=1}^{L^M}V_k \rho V_k^{\dagger}\nonumber\\
 &= \sum_{i,j=1}^M \frac{1}{L^M}\sum_{k=1}^{L^M}V_k P_i \rho P_j V_k^{\dagger},
\end{align}
where $P_i$ is a projector for all $i\in[M]$ and $\sum_{i=1}^M P_i=\mathbb{I}_N$. We analyze the cases $i=j ~\text{and}~ i\neq j$ in Eq. \eqref{eq:need-for-1-design}, separately.

\medskip
\noindent
{\bf {For the~$\mathbf{i=j}$} case}:~In this case, the action of $V_k=\left(\oplus_{i=1}^M \wt{V}_i\right)_k$ on $P_i\rho P_i$ corresponds to applying a unitary $R_l$ on the subspace defined by the projector $P_i$. For each choice of $R_l$ in a given subspace $P_i\rho P_i$ we can have $L^{M-1}$ possible choices of $R_l$ in remaining $M-1$ subspaces. This means that for each block we will have $\sum_{k=1}^{L^M} V_k P_i \rho P_i V_k^{\dagger}= L^{M-1} \sum_{l=1}^L R_l P_i\rho P_iR_l^{\dagger}$. Now we have
\begin{align}
\label{eq:diagonal-block}
    \sum_{i=1}^M \frac{1}{L^M}\sum_{k=1}^{L^M}V_k P_i \rho P_i V_k^{\dagger}
    &= \sum_{i=1}^M \frac{1}{L}\sum_{l=1}^{L} R_l P_i\rho P_i R_l^{\dagger} \nonumber \\
    &= \sum_i \int_{\wt{V}_i} \wt{V}_iP_i\rho P_i \wt{V}_i^{\dagger} ~\mathrm{d}\mu \left(\wt{V}_i\right),
\end{align}
where $\wt{V}_i \equiv \wt{V}_i\oplus_{j\neq i}0$.

\medskip
\noindent
{\bf {For the~$\mathbf{i\neq j}$} case}: For $i\neq j$, the action of $V_k=\left(\oplus_{i=1}^M \wt{V}_i\right)_k$ on $P_i\rho P_j$ corresponds to applying two unitaries: $R_l$ from left on block $i$ and $R_{l'}^{\dagger}$ from right on block $j$. Given any two choices of $R_l~\text{and}~ R_{l^{'}}$ in the $i^{th}$ and $j^{th}$ block, respectively, of $\oplus_{i=1}^M \wt{V}_i$ we can have $L^{M-2}$ possible choices of $V_l$ in the remaining blocks of $\oplus_{i=1}^M \wt{V}_i$. In particular, we have
\begin{align}
\label{eq:nondiagonal-block}
&\sum_{i\neq j} \frac{1}{L^M} \sum_{k=1}^{L^M}V_k P_i \rho P_j V_k^{\dagger}\nonumber\\
&=\sum_{i\neq j}\frac{1}{L^2} \sum_{l,l^{'}=1}^L R_l P_i\rho P_j R_{l^{'}}^{\dagger}\nonumber \\ 
&= \sum_{i\neq j} \int_{\wt{V}_i, \wt{V}_j} \wt{V}_iP_i\rho P_j \wt{V}_j^{\dagger} ~\mathrm{d}\mu \left(\wt{V}_i\right) ~\mathrm{d}\mu \left(\wt{V}_j\right).
\end{align}
Now combining Eqs. \eqref{eq:diagonal-block} and \eqref{eq:nondiagonal-block}, we get
\begin{align*}
& \frac{1}{L^M} \sum_{k=1}^{L^M}V_k \rho  V_k^{\dagger}=\int_V V \rho V^{\dagger} \prod_{i=1}^M\mathrm{d}\mu\left(\wt{V}_i\right),
\end{align*}  
where $V=\oplus_{i=1}^M\wt{V_i}$. This concludes the proof of the lemma.
\end{proof}

\setlength{\algomargin}{-0.5em}
\RestyleAlgo{boxruled}
\begin{protocol}[ht!!]
  \caption{$\texttt{Work Extraction using Single-Ancilla LCU}$}
  \label{algo:randomized-time-evolution}
   \begin{enumerate}
   \item With probability $1/2$, perform energy measurement on unknown state $\rho$.
   \item With probability $1/2$, execute steps $2(a)$ to $2(e)$.
    \begin{enumerate}
    \item Prepare the state $\rho_1=\ket{+}\bra{+}\otimes \rho$.
    \item Sample an integer $\ell$ uniformly at random from $[1, L^{M}]$ to select $Q_\ell=\{W_{j,\ell}\}_{j=1}^{2M}$.
    \item Obtain i.i.d.~samples $W_1, W_2$ from the distribution $\left\{W_{j,\ell}, 1/(2M)\right\}$.
    \item Define $\wt{W}_1=\ket{0}\bra{0}\otimes \mathbb{I}_N+\ket{1}\bra{1}\otimes W_1$ and $\wt{W}_2=\ket{0}\bra{0}\otimes W_2+\ket{1}\bra{1}\otimes \mathbb{I}_N$. Obtain the state $\rho'$ from $\rho$ as
    \begin{align*}
     \quad \quad \quad \quad \rho'=\left(\mathbb{I}_2\otimes \ug\right)\wt{W}_2 \wt{W}_1\rho_1 \wt{W}^{\dag}_1\wt{W}^{\dag}_2\left(\mathbb{I}_2\otimes \ugdag\right).
    \end{align*}
    Perform measurement of  $-X\otimes H$ on the state $\rho'$.
    \item Multiply $M^2$ to the measurement outcome.
  \end{enumerate}
  \item  For the $j^{\mathrm{th}}$ iteration, store the outcome $ \zeta_{j}$, which on an average (as Steps $1$ and $2$ occur  with probability $1/2$ each) takes value $\frac{1}{2}\left(\lambda_j+ M^2 \wt{\lambda}_j \right)$, where $\lambda_j\in\{E_1,\cdots,E_s\}$ and $\wt{\lambda}_j\in\{ \pm E_1,\cdots, \pm E_s\}$.
  \item Repeat Steps $1$ to $3$ for $T$ times to obtain values $\{\zeta_1,\cdots,\zeta_T\}$ and output $\overbar{\zeta} := \frac{2}{T}\sum_{j=1}^T \zeta_j$.
  
\end{enumerate}
\end{protocol}

\medskip
\noindent
{\it Integrated sampling strategy}:--We sample $\left\{\wt{V}_i\right\}_{i=1}^{M}$ from the Pauli group of $L$ elements. This set then naturally induces the set $\{W_i\}_{i=1}^{2M}$ of $2M$ elements (see Eq. \eqref{eq:w-with-2M}). There are $L^M$ such sets $\{W_i\}$, let us denote them by $Q_{\ell}=\{W_{i,\ell}\}_{i=1}^{2M}$ for $\ell\in\left[L^M\right]$. Thus, in Protocol \ref{algo:randomized-time-evolution}, we can effectively do the following: sample an integer $\ell\in \left[L^M\right]$ uniformly at random to select the set $Q_{\ell}$. Now, we can proceed by obtaining i.i.d. samples $W_1, W_2$ according to the ensemble $\left\{W_{i,\ell}, \frac{1}{2M}\right\}_{i=1}^{2M}$. Then by following the steps of Protocol \ref{algo:randomized-time-evolution}, we output some $\mu_j$ such that for any observable $O$,
\begin{align}
\mathbb{E}[\mu_j]=\Tr\left[O~\dfrac{1}{L^M}\sum_{\ell=1}^{L^M} \overbar{W}_{\ell}\rho \overbar{W}_{\ell}^{\dag}\right],
\end{align}
where $\overbar{W}_{\ell}=\sum_{j=1}^{2M}W_{j,\ell}/(2M)$. We can formally state the steps of Protocol  \ref{algo:randomized-time-evolution}.

\begin{remark}
In step 1 of Protocol \ref{algo:randomized-time-evolution}, the outcomes for the measurement of $H$ are the eigenvalues of $H$. In step 2(d), when we are measuring $-X \otimes H$, for each eigenvalue of $H$, there are two possible outcomes of $X$, namely, $\{+1, -1\}$.
\end{remark}

\begin{theorem}
\label{theorem:observe-erg}
Suppose $\varepsilon,\delta\in (0,1)$. Let $H$ be an $n$-qubit Hamiltonian and $\rho\in \mathcal{D}_N$ be an $N=2^n$-dimensional unknown density matrix. Then Protocol \ref{algo:randomized-time-evolution} outputs $\zeta$ satisfying,
$\left|\zeta-\ergobs{\rho}\right|\leq \varepsilon$ with success probability at least $1-\delta$ for 
\begin{align}
    T=O\left(\dfrac{M^4\norm{H}^2}{\varepsilon^2}\log\left(\frac{1}{\delta}\right)\right).
\end{align}
Further, if the unitary $\ug$ can be constructed with gate depth $\tau_g$ and $\tau_{\max}$ is the upper bound on the gate depth of the most expensive $W_{j,\ell}$, the cost of each run is at most $2\tau_{\max}+\tau_g$.
\end{theorem}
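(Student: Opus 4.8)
The plan is to first show that the output $\overbar\zeta=\tfrac{2}{T}\sum_{j}\zeta_j$ of Protocol~\ref{algo:randomized-time-evolution} is an unbiased estimator of $\ergobs{\rho}$, and then to control its fluctuations with Hoeffding's inequality; the degraded $M^4$ scaling will emerge entirely from the $M^2$ rescaling that the Single-Ancilla LCU step forces upon each sample.

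First I would propagate the input $\rho_1=\ket{+}\bra{+}\otimes\rho$ of Step~2 through the two controlled operators $\wt{W}_1$ and $\wt{W}_2$. A short calculation shows that $\wt{W}_2\wt{W}_1\rho_1\wt{W}_1^{\dag}\wt{W}_2^{\dag}$ carries $W_1\rho W_2^{\dag}$ and $W_2\rho W_1^{\dag}$ in its two off-diagonal ancilla blocks (and $W_a\rho W_a^{\dag}$ on the diagonal). Conjugating by $\mathbb{I}_2\otimes\ug$ and measuring $-X\otimes H$ projects onto exactly these off-diagonal blocks, so for fixed $W_1,W_2$ the outcome has conditional expectation $-\tfrac12\Tr[H\ug W_1\rho W_2^{\dag}\ugdag]-\tfrac12\Tr[H\ug W_2\rho W_1^{\dag}\ugdag]$.

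Next I would take expectations over the randomness. Because $W_1,W_2$ are sampled i.i.d.\ and uniformly from $\{W_{j,\ell}\}_{j=1}^{2M}$, one has $\mathbb{E}[W_a]=\overbar{W}_\ell=V_\ell/M$ (using $V_\ell=\tfrac12\sum_j W_{j,\ell}=M\overbar{W}_\ell$), so each cross term averages to $\tfrac{1}{M^2}\ug V_\ell\rho V_\ell^{\dag}\ugdag$; the factor $M^2$ inserted in Step~2(e) exactly compensates this. Averaging over the uniformly drawn $\ell\in[L^M]$ and invoking the preceding $1$-design lemma, $\tfrac{1}{L^M}\sum_\ell V_\ell\rho V_\ell^{\dag}=\wt{\rho}$, so the Step~2 branch contributes $-\Tr[H\ug\wt{\rho}\ugdag]$. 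Since Step~1 contributes $\Tr[H\rho]$ and each branch is chosen with probability $1/2$, I obtain $2\,\mathbb{E}[\zeta_j]=\Tr[H(\rho-\ug\wt{\rho}\ugdag)]=\ergobs{\rho}$, hence $\mathbb{E}[\overbar\zeta]=\ergobs{\rho}$.

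Finally I would apply Hoeffding's inequality to the i.i.d.\ variables $\{\zeta_j\}$. The Step~1 outcome lies in $[-\norm{H},\norm{H}]$ while the rescaled Step~2 outcome lies in $[-M^2\norm{H},M^2\norm{H}]$, so $\zeta_j$ has range at most $2M^2\norm{H}$; requiring $|\overbar\zeta-\ergobs{\rho}|\le\varepsilon$ and using $\overbar\zeta=2\cdot\tfrac1T\sum_j\zeta_j$ yields $\Pr[|\overbar\zeta-\ergobs{\rho}|\ge\varepsilon]\le 2\exp(-T\varepsilon^2/(8M^4\norm{H}^2))$, which drops below $\delta$ once $T=O(M^4\norm{H}^2\varepsilon^{-2}\log(1/\delta))$. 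The depth bound is then immediate, since Step~2(d) applies the two controlled LCU unitaries at cost $\tau_{\max}$ each and then $\ug$ at cost $\tau_g$, for a total of $2\tau_{\max}+\tau_g$. I expect the crux to be the bookkeeping of the second and third paragraphs: recognizing that uniform sampling over the $2M$ LCU terms reproduces $V_\ell/M$ rather than $V_\ell$, which forces the $M^2$ rescaling and thereby inflates the effective range of each sample from $\norm{H}$ to $M^2\norm{H}$. This is precisely the source of the $M^4$ blow-up relative to Protocol~\ref{prot1}, and making that trade-off transparent --- rather than the routine Hoeffding estimate --- is the real content of the statement.
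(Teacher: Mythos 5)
Your proposal is correct and follows essentially the same route as the paper's proof: propagate $\ket{+}\bra{+}\otimes\rho$ through the controlled $\wt{W}_1,\wt{W}_2$, read off the ancilla-off-diagonal blocks via the $-X\otimes H$ measurement, average the i.i.d.\ samples $W_1,W_2$ to get $\overbar{W}_\ell=V_\ell/M$ (whence the $M^2$ rescaling), invoke the $1$-design lemma to recover $\wt{\rho}$, and finish with Hoeffding. The only cosmetic difference is the range bound ($2M^2\norm{H}$ versus the paper's $2(1+M^2)\norm{H}$ for $2Y$), which does not affect the stated $O(M^4\norm{H}^2\varepsilon^{-2}\log(1/\delta))$ sample complexity; you also correctly retain the factor $\tfrac12$ coming from $\ket{+}\bra{+}$ when writing out $\rho'$.
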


\begin{proof}
In each run of the protocol, with probability $1/2$, protocol measures $H=\sum_{a=1}^sE_a Q_a$ (see Eq. \eqref{eq:full-hamiltonian1}) in the state $\rho$. This results in an outcome $E_a$ with probability $q_a=\Tr\left[\rho Q_a\right]$, where $a\in[s]$. Step $2$, with probability $1/2$, performs measurement of $-X\otimes H$ on the state $\rho'$, where
\begin{align*}
    \rho'=\left(\mathbb{I}_2\otimes \ug\right)\wt{W}_2 \wt{W}_1\rho_1 \wt{W}^{\dag}_1\wt{W}^{\dag}_2\left(\mathbb{I}_2\otimes \ugdag\right).
\end{align*}
In the expression for $\rho'$, $\wt{W}_1$ and $\wt{W}_2$ are unitaries sampled with probability $1/(2M)$ each indirectly from $Q_l$, which itself is sampled with probability $1/L^M$. Simplifying $\rho'$, we have
\begin{align*}
    \rho'&:=\rho'(W_1, W_2; Q_{\ell})\\
    &=\left(\mathbb{I}_2\otimes \ug\right)\left[\ket{0}\bra{0}\otimes W_2 \rho W^{\dag}_2 +\ket{0}\bra{1}\otimes W_2 \rho W^{\dag}_1\right.\\
    &\left. +\ket{1}\bra{0}\otimes W_1 \rho W^{\dag}_2+\ket{1}\bra{1}\otimes W_1 \rho W^{\dag}_1\right]\left(\mathbb{I}_2\otimes \ugdag\right).
\end{align*}
We have $X=\sum_{r\in\{1,-1\}}r~ \Pi_r$, where $\Pi_1=\ket{+}\bra{+}$ and $\Pi_{-1}=\ket{-}\bra{-}$. Thus, measurement in step $2$ results in outcome $-M^2 r \times E_a$ with probability $\Tr\left[\rho' \Pi_r\otimes Q_a\right]:=q_{ra}$ for fixed $W_1, W_2$ and $Q_{\ell}$. Each run of steps $1$ and $2$ of the Protocol \ref{algo:randomized-time-evolution} results in an independent and identically distributed random variable. Let us call this random variable $Y$ and it takes values $E_a$ with  probability $q_a/2$ and $-M^2r~ E_a$ with probability $ q_{ra}/2$ for fixed $W_1, W_2$ and $Q_{\ell}$. That is, $\mathbb{E}\left[\zeta_j\right]=\mathbb{E}\left[Y\right]$ for all $j\in[T]$. This also implies that $\mathbb{E}\left[\overbar{\zeta}\right] = \frac{2}{T}\sum_{j=1}^T\mathbb{E}\left[\zeta_j\right] = 2\mathbb{E}\left[Y\right]$. The expectation value of  $Y$ is given by
\begin{align*}
    \mathbb{E}[Y]&=\frac{1}{2}\sum_{a=1}^s E_a \Tr\left[\rho Q_a\right]- \mathbb{E}_{W_1,W_2,Q_{\ell}} \frac{M^2}{2}\sum_{r\in\{\pm 1\}}\sum_{a=1}^s r E_a \Tr\left[\rho' \Pi_r\otimes Q_a\right]\\
    &=\frac{1}{2}\Tr\left[\rho H\right] - \frac{M^2}{2}\Tr\left[\mathbb{E}_{W_1,W_2,Q_{\ell}}\left[\rho'\right] X\otimes H \right]\\
    &=\frac{1}{2}\Tr\left[ H\left(\rho-\rho''\right)\right],
\end{align*}
where $\rho''=M^2\ug\mathbb{E}_{W_1,W_2,Q_{\ell}}\Tr'\left[\rho' X\otimes \mathbb{I}_N\right]\ugdag=M^2\ug\mathbb{E}_{Q_{\ell}}\left[\overbar{W}_{\ell} \rho \overbar{W}_{\ell}^{\dag}\right]\ugdag$, 
and $\Tr'$ denotes the trace with respect to first system. Further, $\overbar{W}_{\ell}=\sum_{j=1}^{2M}W_{j,\ell}/(2M)=\left(\oplus_{j=1}^M\wt{V}_j\right)_l/M = \left(V\right)_l/M$. Thus, we have
\begin{align*}
    \mathbb{E}\left[\overbar{\zeta}\right]&=2\mathbb{E}[Y]\\
    &=\Tr\left[ H \rho\right]-
    \frac{1}{L^M}\Tr\left[ H\ug \sum_{l=1}^{L^M}\left(V\right)_l \rho \left(V^{\dag}\right)_l \ugdag\right]\\
    &=\Tr\left[ H \left(\rho-\ug \wt{\rho} \ugdag\right)\right]\\
    &=\ergobs{\rho},
\end{align*}
where $V=\oplus_{j=1}^M\wt{V}_j$ and $\wt{\rho}$ is given by Eq. \eqref{eq:randomized-state}. But the protocol gives us the sample average, i.e, $\overbar{\zeta}=(2/T)\sum_{j=1}^T\zeta_j$ instead of $\mathbb{E}\left[\overbar{\zeta}\right]=\ergobs{\rho}$. The random variable $2Y$ takes values in the interval $\left[-2(1+M^2)\norm{H}, 2(1+M^2)\norm{H}\right]$. Now, applying Hoeffding's inequality to $\overbar{\zeta}$, we have
\begin{align*}
    \Pr\left[\left|\overbar{\zeta}-\ergobs{\rho}\right|\geq \varepsilon\right]\leq 2 \exp\left[\dfrac{- T\varepsilon^2}{8\left(1+M^2\right)^2\norm{H}^2}\right].
\end{align*}
Thus, taking
\begin{align*}
   T\geq  \dfrac{8\left(1+M^2\right)^2\norm{H}^2}{\varepsilon^2}\ln\left(\frac{2}{\delta}\right)
\end{align*}
ensures that 
\begin{align*}
    \Pr\left[\left|\overbar{\zeta}-\ergobs{\rho}\right|< \varepsilon\right]\geq 1-\delta.
\end{align*}
Thus, it is sufficient to choose
$$T=O\left(\dfrac{M^4\norm{H}^2}{\varepsilon^2}\log\left(\dfrac{1}{\delta}\right)\right)
$$ to estimate the observational egrotropy with an additive error $\varepsilon$ and success probability at least $1-\delta$ for all $\varepsilon>0$ and $0< \delta <1$.
\end{proof}
As mentioned previously, the sample complexity is exponentially worse than Protocol \ref{prot1} of the main article as the factor $M^4$ can scale as a polynomial in the dimension $N$.

\section{Robustness of the observational ergotropy: A simple example}
\label{append:ex-rob}
Consider a $4$-qubit system ($N=16$) with the coarse-grained state
\begin{align}
  \wt{\rho} = \sum_{i=1}^{4} \frac{p_i}{2^r}\, P_i,
\end{align}
where each block projector $P_i$ has $d=2^r=4$ dimensions and corresponds to a maximally mixed state in that subspace. Let the actual measurement probabilities be given as
\[
p_1 = 0.40, \quad p_2 = 0.37, \quad p_3 = 0.15, \quad p_4 = 0.08.
\]
Let the single qubit Hamiltonian be given by $\ket{1}\bra{1}$. Then the total $4$-qubit Hamiltonian has energy levels with degeneracies $g_0 = 1$, $g_1 = 4$, $g_2 = 6$, $g_3 = 4$ and  $g_4 = 1$ corresponding to energy eigenvalues $0$, $1$, $2$, $3$, and $4$, respectively. Let us define the following sets
\begin{align*}
    &A_1=\{\ket{0000}, \ket{0001}, \ket{0010}, \ket{0100}\},\\
    &A_2=\{\ket{1000}, \ket{0011}, \ket{0101}, \ket{0110}\},\\
    &A_3=\{\ket{1001}, \ket{1010}, \ket{1100}, \ket{0111}\},\\
    &A_4=\{\ket{1011}, \ket{1101}, \ket{1110}, \ket{1111}\}.
\end{align*}
The global  unitary $U_{\mathrm{gl}}$, estimating the observational ergotropy, maps eigenstates in block $P_i$ to eigenstates in $A_i$ for $i=1,2,3,4$. Then the observational ergotropy is given by
\begin{align}
    \ergobs{\rho} &= \Tr[\rho H] - \frac{1}{4}\left[0.4\times 3 + 0.37\times (1+2\times 3) + 0.15\times (2\times 3 + 3)+0.08\times (3\times 3 + 4)\right]\nonumber\\
    &= \Tr[\rho H] - \frac{1}{4}\left[1.20 + 2.59 + 1.35 +1.04\right]\nonumber\\
    &= \Tr[\rho H] - 1. 545.
\end{align}
Suppose the estimated probabilities have an error of order $0.05$, i.e. $\varepsilon=0.05$, and the estimated probabilities are obtained as
\[
\tilde{p}_1 = 0.36, \quad \tilde{p}_2 = 0.39, \quad \tilde{p}_3 = 0.18, \quad \tilde{p}_4 = 0.07.
\]
Here $\tilde{p}_2 > \tilde{p}_1$, so the changed global unitary $\wt{U}_{\mathrm{gl}}$, determining the guessed observational ergotropy, and the global unitary $U_{\mathrm{gl}}$ differ by a swap of blocks $P_1$ and $P_2$. $\wt{U}_{\mathrm{gl}}$ maps eigenstates in block $P_{\pi(i)}$ to eigenstates in $A_i$ for $i=1,2,3,4$. Here $\pi$ is a permutation of four objects defined as $\pi(1) = 2$, $\pi(2)=1$, $\pi(3)=3$ and $\pi(4)=4$. Then the guessed observational ergotropy is given by
\begin{align}
   \wt{\mathrm{Erg}}_{\mathrm{obs}}(\rho) &= \Tr[\rho H] - \frac{1}{4}\left[0.37\times 3 + 0.4\times (1+2\times 3) + 0.15\times (2\times 3 + 3)+0.08\times (3\times 3 + 4)\right]\nonumber\\
   &= \Tr[\rho H] - \frac{1}{4}\left[1.11+2.8+1.35+1.04\right]\nonumber\\
    &= \Tr[\rho H] - 1.575,
\end{align}
Thus, $\left|\wt{\mathrm{Erg}}_{\mathrm{obs}}(\rho)-\ergobs{\rho}\right|=0.03$. Now the upper bound in Eq. \eqref{eq:bound-good} of Lemma \ref{lem:guess-ergobs-ergobs-diff} is given by $\norm{H}\sum_{i=1}^4\left|p_i-p_{\pi(i)}\right| = 4\times2 |p_1-p_2|=0.24$.

\bibliography{ref}
\end{document}